\documentclass[format=acmsmall]{acmart}
\pdfoutput=1 

\setcopyright{none} 
\settopmatter{printacmref=false} 
\renewcommand\footnotetextcopyrightpermission[1]{}
%
%

\usepackage{gensymb}
\usepackage{graphicx}
\usepackage{url}
\usepackage{amsmath}
\usepackage{float}
\usepackage{amsfonts}
\usepackage{latexsym,epsfig,color,verbatim}
\usepackage{algorithm}
\usepackage{algpseudocode}
\usepackage{pifont}
\usepackage{lipsum}
\usepackage{mwe}
\usepackage{slashbox}
\usepackage{color}
\usepackage{cancel}
\usepackage{multirow}
\usepackage{subcaption}

\newcommand{\BEQA}{\begin{eqnarray}}
\newcommand{\EEQA}{\end{eqnarray}}

\newtheorem{lemma}{Lemma}
\newtheorem{theorem}{Theorem}

\newtheorem{remark}{Remark}

\usepackage{bbm}
\usepackage{slashbox}

\def\beq{\begin{equation}}
\def\eeq{\end{equation}}
\def\beqn{\begin{equation*}}
\def\eeqn{\end{equation*}}
\def\bearn{\begin{eqnarray*}}
\def\eearn{\end{eqnarray*}}
\def\bear{\begin{eqnarray}}
\def\eear{\end{eqnarray}}
\def\barr{\begin{array}}
\def\earr{\end{array}}

\newcommand {\done} {\quad\vrule height4pt WIDTH4PT}
\newcommand{\E}{\mathbb{E}}
\newcommand{\N}{\mathbb{N}}
\newcommand{\Prob}{\mathbb{P}}
\newcommand{\R}{\mathbb{R}}
\newcommand{\Z}{\mathbb{Z}}

\def\Z{\mathbb{Z}}

\usepackage{soul}
\usepackage{mathtools}

\begin{document}
\title{A New Upper Bound on Cache Hit Probability for Non-anticipative Caching Policies}

\author{Nitish K. Panigrahy}
\affiliation{
  \institution{University of Massachusetts Amherst }
  \state{Amherst, MA 01003} 
  \country{USA}
}
\email{nitish@cs.umass.edu}

\author{Philippe Nain}
\affiliation{
  \institution{Inria Grenoble -- Rh\^one-Alpes}
  \state{Lyon 69364} 
  \country{France}
}
\email{philippe.nain@inria.fr}

\author{Giovanni Neglia}
\affiliation{
  \institution{Inria Sophia Antipolis -- M\'editerran\'ee, UCA}
  \country{France}
}
\email{giovanni.neglia@inria.fr}

\author{Don Towsley}
\affiliation{
  \institution{University of Massachusetts Amherst }
  \state{Amherst, MA 01003} 
  \country{USA}
}
\email{towsley@cs.umass.edu}

\maketitle


Caching systems have long been crucial for improving the performance of a wide variety of network and web based online applications. In such systems, end-to-end application performance heavily depends on the fraction of objects transfered from the cache, also known as the \emph{cache hit probability}. Many caching policies have been proposed and implemented to improve the hit probability. In this work, we propose a new method to compute an upper bound on hit probability for all non-anticipative caching policies, i.e., for policies that have no knowledge of future requests. Our key insight is to order the objects according to the ratio of their Hazard Rate (HR) function values to their sizes and place in the cache the objects with the largest ratios till the cache capacity is exhausted. 
Under some statistical assumptions, we prove that our proposed HR to size ratio based ordering model computes the maximum achievable hit probability and serves as an upper bound for all non-anticipative caching policies. We derive closed form expressions for the upper bound under some specific object request arrival processes. We also provide simulation results to validate its correctness and to compare it to the state-of-the-art upper bounds. We find it to be tighter than state-of-the-art  upper bounds for a variety of  object request arrival processes. 

\section{Introduction}\label{sec:intro}

Caches are pervasive in computing systems, and their importance is reflected in many networks and distributed environments including \emph{content delivery networks} (CDNs). In such networks, the end user quality of experience primarily depends on whether the requested object is cached near the user. Thus the cache \emph{hit probability}, i.e., the percentage of requests satisfied by the cache, plays an important role in determining the end-to-end application performance. In general, the number of objects available in a system is much larger than the cache capacity. Hence, the design of caching algorithms typically focuses on maximizing the overall cache hit probability.  Also, maximizing the cache hit probability corresponds to minimizing the expected retrieval time, the load on the server and on the network when object sizes are equal. 

One possible way to improve cache hit probability is by increasing cache capacity. However, increasing cache capacity only logarithmically improves cache hit probability\cite{Breslau1999, Almeida1996}. Thus improving caching policies seems to be more effective for maximizing the overall cache hit probability.
In practice, most caches employ  \emph{least-recently used} (LRU) or its variants often coupled with call admission or prefetching \cite{berger18}. Apart from LRU, other well known eviction policies include  LFU, FIFO, RANDOM. There has been plethora of work \cite{Jiang2002, Megiddo2003, Tanenbaum01, Arlitt2000, Beckmann2018, Cao1997, Jaleel2010} on improving cache hit probabilities in the literature. 
In order to gauge the potential effectiveness of these eviction policies, an upper bound on maximum achievable cache hit probability for a given cache capacity has been widely adopted \cite{Arlitt2000}.


\subsection{Offline upper bound}
\label{ssec:offline-ub}
For equal size object, B\'el\'ady's algorithm or MIN \cite{aho71}  has been widely used as an upper bound for cache hit probability among all feasible on demand and online caching policies, togetherly known as \emph{non-anticipative} policies. However, B\'el\'ady's algorithm is an offline algorithm, i.e., it assumes exact knowledge of future requests. Offline upper bounds on object hit probability have been proposed for variable (different) size object  \cite{berger18}. Often system designers do not have access to the exact request trace, but can estimate the statistical properties of the object request process such as the \emph{inter-request time} (irt) distribution. Also, caching studies typically include model driven simulations. Thus the following natural question arises. {\it With limited knowledge of the object arrival process and no look ahead option, can we provide an upper bound on the cache hit probability for any feasible online caching policy?}

\subsection{Our Approach: Hazard Rate based upper bound}
\label{ssec:IRM}
When object requests follow the {\it Independent Reference Model} (IRM), i.e., when objects are referenced independently with fixed probabilities, ideal {Least-Frequently Used} (LFU) caching policy is asymptotically optimal in terms of object hit probability. However, general request processes are more complex and correlated.
%

In this work, we assume a larger class of statistical models for object reference streams, see Section \ref{ssec:number-hits} and Section \ref{sub:hrb} for more details. 
We also assume that the {\it hazard rate} (HR) function (or conditional intensity) associated with this point process is well defined and can be computed at all points of time $t$. Here, the HR function is the conditional density of the occurrence of an  object request at time $t$, given the realization of the request process over the interval $[0, t)$ \cite{Daley2003}. 

We now propose the HR based upper bound as follows.  When objects have equal size, at any time $t$ we determine the HR values of each object and  place in the cache the $B$ objects which have the largest HR values. When objects have different sizes, we sort the objects according to the ratio of their HR values at time $t$ to their sizes in decreasing order. Note that, an  ideal LFU policy keeps track of number of times an object is referenced and order them accordingly in the cache. Similarly, in our upper bound, we keep an ordered list but on the basis of ratio of HR values to object sizes. We then place in the cache the objects with the largest ratios till the cache capacity is exhausted. We emphasize that we do not provide new caching policies that would outperform all other policies but instead we provide various upper bounds on the cache hit probability. 


Our contributions are summarized below:
\begin{enumerate}
\item We present a new upper bound for cache hit probability among all non-anticipative caching policies:
\begin{itemize}
\item When objects have equal sizes, a simple HR based ordering for the objects provides an upper bound on cache hit probability.
\item For variable size objects, we order the objects with respect to the ratio of their HR function values to their objects sizes and provide upper bounds on the  byte and object hit probabilities. 
\end{itemize}
\item We derive closed form expressions for the upper bound under some specific object request arrival processes.
\item We evaluate and compare the HR based upper bound with different cache replacement policies for both synthetic and real world traces.
\end{enumerate}

The rest of this paper is organized as follows. In Section \ref{sec:model} we formally present the HR based upper bound for equal size objects. In Section \ref{sec:vars} we develop HR based upper bound for variable size objects. We discuss the HR based upper bound for specific object request arrival processes in Section \ref{sec:disc}. We perform simulation experiments to compare HR based upper bound with other policies in Section \ref{sec:perf}. 
Finally, the conclusion of this work and potential future works are given in Section \ref{sec:con}.

\section{Equal Size Objects}\label{sec:model}

We consider a cache of capacity $B$ serving $n$ distinct equal size objects. Without loss of generality we assume that all objects have size one. Later in Section \ref{sec:vars}, we also consider objects with different sizes. Let $\mathcal{D}=\{1, \cdots, n\}$ be the set of objects.  We denote by $\N=\{0,1,2,\ldots\}$ the set of all nonnegative integers, $\N^*=\N-\{0\}$, and by 
$\Z=\{0,\pm 1,\pm 2, \ldots\}$ the set of all integers.

\subsection{Number of Hits for General object Arrival Processes}
\label{ssec:number-hits}

Let $\{0<T_{i,1} <T_{i,2}<\cdots\}$ be the successive time epochs when object $i$ is requested.  Assume $\{T_{i,k}\}_k$ is a regular point process \cite{Daley2003}. Define $X_{i,k}=T_{i,k}- T_{i,k-1}$  for $k\geq 2$  and $X_{i,1}=T_{i,1}$. 
Let $\{0<T_1<T_2<\cdots\}$ be the point process resulting from the superposition of the point processes $\{T_{i,k}\}_k$, $i=1,\ldots,n$.  
For $t>0$, define  $\mathcal{H}_{i,t}=\{T_{i,k}, k\in\N^*: T_{i,k}<t\}$ the history of the point process $\{T_{i,k}\}_k$ up to time $t$ and 
 $\mathcal{H}_t=\cup_{i=1}^n \mathcal{H}_{i,t}$ the entire history of all point processes $\{T_{1,k} \}_k,\ldots,\{T_{n,k}\}_k$ up to time $t$.  Notice that
 $\mathcal{H}_{i,t}$ is left-continuous for all $i$ and so is $\mathcal{H}_t$. In particular, $T_k\not\in \mathcal{H}_{T_k}$ for all $k$.
Define $k_i(t)=\max\{k\geq 1 : T_{i,k-1}<t\}$, so that exactly $k_i(t)-1$ requests for object $i$ have been made in $[0,t)$.

Assume that the request object processes $\{T_{i,k}\}_k$, $i=1,\ldots,n$, are conditionally independent $\forall t>0$, in the sense that
\begin{align}
\label{assumption-arrivals}
\Prob(T_{1,k_1(t)}\geq\, t_1, \cdots, T_{n,k_n(t)}\geq\, t_n\,|\, \mathcal{H}_t)= \prod_{i=1}^n  \Prob(T_{i,k_i(t)}\geq\,t_i \,|\, \mathcal{H}_{i,t}),
\end{align}
for all $t_1\geq t, \ldots, t_n\geq t$.

Given $T_{i,k}=t_{i,k}$ for $k\geq 1$, the hazard rate function of  $\{T_{i,k}\}_k$ at time $t$ is defined by the piecewise function \cite[Definition 7.2.II, p. 231]{Daley2003}
\begin{align}
\lambda^*_i(t)=\left\{\begin{array}{ll}
\frac{ \frac{d}{dt}P(X_{i,1}<t )}{P(X_{i,1}>t)} &\mbox{for $0<t\leq t_{i,1}$,}\\
\frac{\frac{d}{dt}P(X_{i,k}<t-t_{i,k-1} \, | \, T_{i,j}=t_{i,j}, j\leq k-1)}{P(X_{i,k}>t-t_{i,k-1}\,| \, T_{i,j}=t_{i,j}, j\leq k-1)}  &\mbox{for $t_{i,k-1}<t\leq t_{i,k}$, $k\geq 2$.}\\
                                   \end{array}
                         \right.
\label{hazard-rate}
\end{align}
In (\ref{hazard-rate}) the existence of $\frac{d}{dt}P(X_{i,1}<t )$ and    $\frac{d}{dt}P(X_{i,k}<t-t_{i,k-1}\, |\,T_{i,j}=t_{i,j}, j\leq k-1)$  for $k\geq 2$, follows from the assumption  that  $\{T_{i,k}\}_k$ is a regular point process  \cite{Daley2003}.

We assume that the cache is empty at time $t=0$ to avoid unnecessary  notational complexity but
all results in the paper hold without this assumption as long as the state of the cache is known at time $t=0$.
A  caching policy $\pi$ determines at any time $t$ which $B$ objects among the $n$ available objects are cached. Formally, $\pi$ is a measurable deterministic mapping from
$\R\times (\R\times \R)^\infty\to  S_B$, 
 where $S_B$ is the set of subsets of $\{1,\ldots,n\}$ which contain $B$ elements. In this setting, $\pi(t,\mathcal{H}_t)$ gives the $B$ objects that are  cached at time $t$ based on the knowledge of the overall request process up to $t$.
Let $\Pi$ be the collection of all such policies. Note that policies in $\Pi$ are  non-anticipative policies, in the sense that they do not know when future requests will occur.

We will only consider deterministic policies although the setting can easily be extended  to random policies (in this case $\pi:\R\times (\R\times \R)^\infty\to  \mathcal{Q}(S_B)$ where
$\mathcal{Q}(S_B)$ is the set of probability distributions on  $S_B$). 
 

We introduce the {\em  hazard rate} (HR) based  {\em rule}  for equal-size objects, abbreviated as HR-E.
At any time $t$ and given $\mathcal{H}_t$, HR-E (i) determines the hazard rate function of each object and  (ii) places in the cache the $B$ documents which have the largest hazard rate functions, {\em i.e.} if  $\lambda^*_{i_1}(t)\geq \cdots\geq\lambda^*_{i_n}(t)$ then objects $i_1,\ldots,i_B$ are cached at time $t$ (ties between equal rates are broken randomly). 
We call it a rule, not a policy and will use it as a way to upper-bound various performance metrics (see next), which is the goal of this paper.

Let $B^\pi_k\in S_B$ be the state of the cache just before time $T_k$ (sometimes abbreviated to $T_k-$) under $\pi$. 
Call $R_k$ the object requested at time $T_k$ under $\pi$ and define 
\begin{equation}
H^\pi_k={\bf 1}(R_k\in B^\pi_k),
\label{def-Hk}
\end{equation}
{\em i.e.}, $H^\pi_k=1$ if the $k$-th requested object  is in the cache and 
$H^\pi_k=0$ otherwise. Denote by $N^\pi_K=\sum_{k=1}^K H^\pi_k$ the number of hits during the first $K$ requests for an object.

The following lemma holds, 

\begin{lemma}[Expected number of hits]
\label{lem1}
\begin{equation}
\label{lem:exp-hits}
\E\left[N^{HR-E}_K\right]\geq \E\left[N^\pi_K\right],\quad \forall \pi\in \Pi,\; \forall K\geq 1.
\end{equation}
\end{lemma}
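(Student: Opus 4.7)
The plan is to prove the stronger per-request inequality $\Prob(R_k \in B_k^{HR-E}) \geq \Prob(R_k \in B_k^\pi)$ for every $k \geq 1$ and every $\pi \in \Pi$; summing over $k$ and invoking linearity of expectation immediately gives (\ref{lem:exp-hits}). The key intuition is that at the $k$-th arrival the one-step hit probability is determined entirely by the instantaneous hazard rates, so the greedy rule that caches the $B$ objects with the largest $\lambda_i^*$ is pointwise optimal.

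To make this precise, I would fix $k$ and condition on the pair $(\mathcal{H}_{T_k}, T_k)$, i.e.\ the entire request history strictly before $T_k$ together with the $k$-th arrival instant itself. Because any non-anticipative policy is allowed to depend on the past history and the current time, both $B_k^\pi$ and $B_k^{HR-E}$ are measurable with respect to this information. The main computation is then to establish the competing-risks identity
\[
\Prob(R_k = i \mid \mathcal{H}_{T_k}, T_k) = \frac{\lambda_i^*(T_k)}{\sum_{j=1}^n \lambda_j^*(T_k)}, \qquad i = 1,\ldots,n,
\]
which follows from the conditional independence assumption (\ref{assumption-arrivals}) together with the hazard-rate definition (\ref{hazard-rate}): given $\mathcal{H}_t$ at a time $t$ just before $T_k$, the $n$ residual waiting times are independent with hazard functions $\lambda_i^*(\cdot)$, so the joint density of $\{T_k = t+s,\ R_k = i\}$ factors as $\lambda_i^*(t+s)\exp\bigl(-\int_t^{t+s}\sum_j \lambda_j^*(u)\,du\bigr)$; dividing by the marginal density of $T_k$ at $t+s$ produces the displayed ratio.

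Given this identity, for any admissible $B \in S_B$ the conditional hit probability equals $\sum_{i \in B}\lambda_i^*(T_k)/\sum_j \lambda_j^*(T_k)$, which as a function of $B$ is maximized by selecting the $B$ indices with the largest $\lambda_i^*(T_k)$---precisely the HR-E choice. Hence $\E[H_k^{HR-E} \mid \mathcal{H}_{T_k}, T_k] \geq \E[H_k^\pi \mid \mathcal{H}_{T_k}, T_k]$ almost surely; taking expectations and summing over $k \in \{1,\ldots,K\}$ yields (\ref{lem:exp-hits}). The main obstacle I anticipate is the rigor of the competing-risks calculation: since $\lambda_i^*$ is defined piecewise across the intervals $(T_{i,k-1}, T_{i,k}]$, the factorization of the joint conditional survival function from (\ref{assumption-arrivals}) has to be spelled out carefully rather than merely asserted, and the first interval ($k_i(T_k) = 1$) and the switching endpoints of the piecewise branches require separate handling.
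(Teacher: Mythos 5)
Your proposal is correct and follows essentially the same route as the paper: the paper likewise conditions on the history at $T_k$, proves the identity $\Prob(R_k=i\mid\mathcal{H}_{T_k})=\lambda^*_i(T_k)/\sum_j\lambda^*_j(T_k)$ in its appendix via the same competing-risks limit argument built on the conditional independence assumption, and then concludes by the greedy optimality of picking the $B$ largest hazard rates, deconditioning, and summing over $k$.
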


\begin{proof} Fix $\pi\in \Pi$.
Given that a request for an object is made at time $t$ and given that the history $\mathcal{H}_t$ is known, this request is for object $i$ with the probability
\begin{equation}
\label{pi-t}
p_i(t)=\frac{\lambda^*_i(t)}{\sum_{j=1}^n \lambda^*_j(t)},
\end{equation}
Proof of (\ref{pi-t}) is given in the Appendix \ref{sub-app-pi}. This result relies on the conditional independence of the point processes $\{T_{i,k},k=1,2,\ldots\}$, $i=1,\ldots,n$. Observe that $p_i(t)$ does not depend on the caching policy in use.

By the definition of the HR-E policy 
\[
\sum_{i\in B^{HR-E}_k}\lambda^*_i(T_k) \geq \sum_{i\in B^{\pi}_k} \lambda^*_i(T_k),
\]
for $k=1,2,\ldots$. Therefore, for $k\geq 1$, 
\begin{eqnarray}
\E\left[H^{HR-E}_k\,|\, \mathcal{H}_{T_k}\right] = \sum_{i\in B^{HR-E}_k} \frac{\lambda^*_i(T_k) }{ \sum_{j=1}^n \lambda^*_j(T_k)}\geq
\sum_{i\in B^{\pi}_k}\frac{\lambda^*_i(T_k) }{ \sum_{j=1}^n \lambda^*_j(T_k)} = \E\left[H^{\pi}_k\,|\, \mathcal{H}_{T_k}\right].\label{lem:inq1}
\end{eqnarray}
Taking expectation on both sides of (\ref{lem:inq1}) to remove the conditioning yields 
$\E\left[H^{HR-E}_k\right]\leq \E\left[H^{\pi}_k\right]$.
Summing both sides of the latter inequality for $k=1,\ldots,K$ gives (\ref{lem:exp-hits}) from the definition of $N^\pi_K$.
\end{proof}

It is worth noting that Lemma \ref{lem1} holds for any non-stationary request object processes. We now study a more specific request arrival process and derive an upper bound on the object hit probability.

\subsection{Upper Bound on the Hit Probability for Stationary and Ergodic Object Arrival Processes}
\label{sub:hrb}

Let $\{\cdots<T_{i,-1}<T_{i,0}\leq 0<T_{i,1} <\cdots\}$ be the successive time epochs when object $i$ is requested. Define $X_{i,k}=T_{i,k}- T_{i,k-1}$ and introduce
the  two-sided sequence $\{X_{i,k},k\in \Z\}$ of inter-request times to object $i$.  We assume that $\{X_{i,k},k\in \Z\}$  is a stationary and ergodic sequence, and that
$X_{i,k}$ has a finite mean given by $\E[X_{i,k}]=\frac{1}{\lambda_i}$. We further assume that  the point processes $\{T_{i,k},k\in \Z\}$, $i=1,\ldots,n$, are mutually independent.

Define the point process $\{\cdots<T_{-1}<T_0<0 \leq T_1<T_2<\cdots\}$ obtained as the superposition of the $n$ point processes $\{T_{i,k},k\in \Z\}$, $i=1,\ldots,n$,
Define  $X_k:=T_k-T_{k-1}$, so that  $\{X_k,k\in \Z\}$ is the sequence of inter-request times for the point process $\{T_k,k\in\Z\}$.

The stationarity, ergodicity, and independence assumptions placed on point processes $\{T_{i,k},k\in \Z\}$, $i=1,\ldots,n$, imply that the sequence
 $\{(X_k,R_k),k\in \Z\}$ is stationary and ergodic (see {\em e.g.} \cite[pp. 33-34]{BB2003}).  
%

The stationary hit probability of policy $\pi\in \Pi$ is defined as  (cf. (\ref{def-Hk}))
\begin{equation}
\label{def-h}
h^\pi=\lim_{K\to\infty}\frac{1}{K}\sum_{k=1}^{K}  H^\pi_k \quad \hbox{a.s.,} 
\end{equation}
whenever this limit exists.

For any $\pi\in \Pi$, there exists a measurable mapping $\varphi^\pi: (R\times R)^\infty\to \{0,1\}$ such that $H^\pi_k=\varphi^\pi((T_j,R_j), j\leq k-1)$, which shows that the sequence 
$\{H^\pi_k,k\in \Z\}$ is stationary and ergodic ({\em e.g.} see \cite[Thm p. 62]{Phillips92}).  The ergodic theorem then yields the stationary hit probability
(see {\em e.g.} \cite[Thm 1]{Kingman68})
\begin{equation}
\label{h-2}
h^\pi=\Prob(H^\pi_k),
\end{equation}  
under $\pi$.
%
We are now in position to state and prove the main result of the paper.

\begin{theorem}[Stationary hit probability]\hfill

\[
h^{HR-E} \ge \max\limits_{\pi \in \Pi} h^{\pi}.
\]
\end{theorem}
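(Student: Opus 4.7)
The statement follows almost immediately by combining Lemma~\ref{lem1} (or more precisely, the per-index inequality that appears inside its proof) with the stationarity/ergodicity assumptions of Section~\ref{sub:hrb}. My plan is to push the conditional HR-based comparison through to the steady state rather than to redo it from scratch.

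First, I would observe that the computation behind Lemma~\ref{lem1} actually yields, for each fixed $k$, the stronger per-index inequality
\[
\E\!\left[H^{HR-E}_k \,\big|\, \mathcal{H}_{T_k}\right] \;\ge\; \E\!\left[H^{\pi}_k \,\big|\, \mathcal{H}_{T_k}\right],
\]
for every $\pi \in \Pi$. The derivation relies only on (i)~the identification of $p_i(T_k)=\lambda^*_i(T_k)/\sum_j \lambda^*_j(T_k)$ as the conditional request probability given $\mathcal{H}_{T_k}$, which itself only uses the conditional independence assumption in~(\ref{assumption-arrivals}), and (ii)~the greedy definition of HR-E. Both of these are time-invariant facts and do not depend on the convention that $T_1$ is the first request after time $0$. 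Thus the same inequality holds in the two-sided setting of Section~\ref{sub:hrb}. Taking unconditional expectations gives $\E[H^{HR-E}_k] \ge \E[H^{\pi}_k]$ for every $k \in \Z$.

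Next, I would use stationarity. Because HR-E is a measurable deterministic map from history to $S_B$, the representation $H^\pi_k = \varphi^\pi((T_j,R_j), j\le k-1)$ given in the paper applies to HR-E as well, so $\{H^{HR-E}_k, k \in \Z\}$ is stationary and ergodic for the same reasons as $\{H^\pi_k, k \in \Z\}$. Identity~(\ref{h-2}) then delivers
\[
h^{HR-E} \;=\; \Prob(H^{HR-E}_k=1) \;=\; \E[H^{HR-E}_k],
\qquad
h^{\pi} \;=\; \E[H^{\pi}_k],
\]
for any single $k$. Combining this with the per-index inequality above yields $h^{HR-E} \ge h^\pi$ for every $\pi \in \Pi$, and taking the supremum over $\pi$ on the right-hand side completes the proof.

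The main obstacle is not really an obstacle but a point that needs to be checked cleanly: one must make sure that the conditional argument leading to $p_i(t)=\lambda^*_i(t)/\sum_j \lambda^*_j(t)$, originally proved in Appendix~\ref{sub-app-pi} under the one-sided regular point process framework with an empty cache at $t=0$, carries over verbatim to the two-sided stationary framework. This is what allows $\E[H^{HR-E}_k] \ge \E[H^{\pi}_k]$ to hold for every $k \in \Z$, not merely for $k \ge 1$ after an initial transient. Once this is in place, no ergodic averaging or Cesaro limit is actually needed — stationarity alone is enough, since the inequality is already pointwise in $k$.
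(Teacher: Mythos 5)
Your proof is correct and follows essentially the same route as the paper: both rest on the conditional greedy inequality from Lemma~\ref{lem1} together with the ergodic identification of $h^\pi$ in~(\ref{h-2}). The only difference is cosmetic --- the paper passes through the Cesaro limit $\lim_K \E[N^\pi_K]/K$ via dominated convergence, whereas you invoke stationarity to use the per-index inequality $\E[H^{HR-E}_k]\ge\E[H^\pi_k]$ directly; your observation that the one-sided argument for $p_i(t)$ must be checked to carry over to the two-sided stationary framework is a point the paper itself leaves implicit.
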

\begin{proof}
Taking the expectation on both sides of (\ref{def-h}), using the fact that $h^\pi$ is a constant from (\ref{h-2}), and then  invoking the dominated convergence theorem, gives
\begin{equation}
\label{def-h2}
h^\pi= \lim_{K\to\infty}\frac{1}{K}\sum_{k=1}^{K}\E[H^\pi_k]= \lim_{K\to\infty}\frac{\E[N^\pi_K]}{K},
\end{equation}
with $N^\pi_K=\sum_{k=1}^{K} H^\pi_k$ (see Section \ref{ssec:number-hits}) the number of hits in $[0,T_{K}]$ or, equivalently due to the stationary, the number of hits in $K$ consecutive requests.
Proof is concluded by using Lemma \ref{lem1}.
\end{proof}


\begin{remark}
The computation of the HR-E based upper bound does not require the simulation of any caching policy.  At each request for an object, one can evaluate the HR values for all objects. One can then treat the request $R_k$ as a hit if the hazard rate of $R_k$ is among the top $B$ hazard rates at time $T_k$.
\end{remark}

\section{Variable Size Objects}\label{sec:vars}

We now assume object $i$ has size $s_i \in \mathbb{R^+}$ for all $i \in \mathcal{D}$ and the capacity of the cache is $B$ bytes. 

\subsection{Number of byte hits and fractional knapsack problem}\label{ssec:nb}
The setting and assumptions are that of Section \ref{ssec:number-hits} but  fractional caching (FC) is now allowed. We denote by $\Pi_{FC}$ the set of all FC policies.

For $\pi\in \Pi_{FC}$, let $V^\pi_k$ denote the number of bytes served from the cache at the $k$th request for an object. Let $x_{i,k}$ denote the fraction of object $i$ in the cache at the time of the $k$-th request. Then $V^\pi_k = s_i x^\pi_{i,k}$ if the request is for object $i$.  Let $W^\pi_K=\sum_{k=1}^K V^\pi_k$ denote the total number of bytes served from the cache during the first $K$ requests for an object.  

Given a request for an object is made at time $t$ and that the history $\mathcal{H}_t$ is known, we have already observed (see (\ref{pi-t})) that
 this request is for object $i$ with the probability $\lambda^*_i(t)/\sum_{j=1}^n \lambda^*_j(t)$. Therefore,
\begin{equation}
\label{pi-t2}
\mathbbm{E}[V^\pi_k\,|\, \mathcal{H}_{T_k}]=\sum_{i=1}^n \mathbbm{E}[V^\pi_k\,|\,  \mathcal{H}_{T_k}, \text{object $i$ is requested} ]\times \frac{\lambda^*_i(T_k)}{\sum_{j=1}^n \lambda^*_j(T_k)}=
\frac{\sum_{i=1}^n s_i x^\pi_{i,k}\lambda^*_i(T_k)}{\sum_{j=1}^n \lambda^*_j(T_k)}.
\end{equation}

Our goal is to find $\pi\in \Pi_{FC}$ that maximizes $\mathbbm{E}[V^\pi_k\,|\, \mathcal{H}_{T_k}]$ subject to the capacity constraint on the size of the cache. This can be done by solving the optimization problem,
\begin{subequations}\label{eq:fkp}
\begin{align}
\max \quad&\sum\limits_{i=1}^n s_ix_i\lambda^*_i(t)\\
\text{subject to} \quad&\sum\limits_{i=1}^n s_ix_i \le B \\
& 0\le x_i\le1,\quad i=1,\cdots,n,\label{eq:fkp33}
\end{align}
\end{subequations}
which is nothing but the Fractional Knapsack Problem (FKP)  \cite[Chapter 5.1]{Goodrich02}.  It is well known that its solution depends on the respective
values of the ratios $s_i \lambda_i^*(t)/s_i= \lambda_i^*(t)$ for $i=1,\ldots,n$. More specifically, assume that 
\begin{align}
\lambda^*_{i_1}(t) \ge \lambda^*_{i_2}(t)\ge \cdots \ge \lambda^*_{i_n}(t), \label{eq:order}
\end{align}
Then, the solution of \eqref{eq:fkp} is given by $x_{i_j}=1$ for $1\leq j\leq a:= \max\left\{a:s_{i_1} + s_{i_2}+ \cdots + s_{i_{a}}\leq B\right\}$,  
$x_{i_{a+1}}=(B-s_{i_1} - s_{i_2}- \cdots - s_{i_{a}})/s_{i_{a+1}}$, and
$x_{i_j}=0$ for $j=a+2, \ldots, n$.

Call HR-VB the rule which at any time $t$  places {\em entirely} in the cache objects with the highest hazard rates until an object cannot fit; if object $k$ is the first
one that cannot entirely fit  in the cache and objects $i_1,\ldots, i_j$ are already in the cache, then $B-\sum_{l=1}^j s_l$ bytes of object $k$ are cached. All other objects are not
cached. Then, by (\ref{pi-t2}), for any policy $\pi\in \Pi_{FC}$,
\[
\mathbbm{E}[V^{HR-VB}_k\,|\, \mathcal{H}_{T_k}] \geq \frac{\sum_{i=1}^n s_i x^\pi_{i,k}\lambda^*_i(T_k)}{\sum_{j=1}^n \lambda^*_j(T_k)}= 
\mathbbm{E}[V^{\pi}_k\,|\, \mathcal{H}_{T_k}].
\]
Removing the conditioning on $ \mathcal{H}_{T_k}$ yields $\mathbbm{E}[V^{HR-VB}_k]\geq \mathbbm{E}[V^{\pi}_k]$.  Summing both sides of this inequality
for $k=1,\ldots,K$ gives
\[
\mathbbm{E}[W^{HR-VB}_K]\geq \mathbbm{E}[W^{\pi}_K].
\]
\subsection{Number of object hits and $0$-$1$ knapsack problem}\label{sub:01}
The setting and assumptions are still that of Section \ref{ssec:number-hits} but we now assume that objects are  indivisible (IC). In particular, every object hit counts the same ({\em i.e.}, a hit for a large 1GB object and hit for a small 10B object both count as a "hit").  Denote by $\Pi_{IC}$ the set of all IC policies. 
Recall the definition of $H^\pi_k$ ($1$ if hit at $T_k$ and $0$ otherwise) and $N^\pi_K$ (number of hits in first $K$ requests) under $\pi\in \Pi_{IC}$.

Fix $\pi\in  \Pi_{IC}$. We have  by using (\ref{pi-t})
\begin{align}
\mathbbm{E}[ H^\pi_k\,|\, \mathcal{H}_{T_k}]&=\sum_{i=1}^n \mathbbm{E}[ H^\pi_k\,|\, \mathcal{H}_{T_k}, \text{object $i$ is requested at $T_k$}]\times \frac{\lambda^*_i(T_k)}{\sum_{j=1}^n \lambda^*_j(T_k)}\nonumber\\
&=\frac{1}{\sum_{j=1}^n \lambda^*_j(T_k)} \sum_{i=1}^n {\bf 1}(i\in B^\pi_k)\lambda^*_i(T_k),
\label{pi-t3}
\end{align}
where we recall that $B^\pi_k$ is the set of objects in the cache just before $T_k$  under $\pi$.

Hence, $\mathbbm{E}[ H^\pi_k\,|\, \mathcal{H}_{T_k}]$ can be maximized by solving the following 0-1 knapsack problem (KP),
\begin{subequations}\label{eq:kp}
\begin{align}
\max \quad&\sum\limits_{i=1}^n x_i\lambda^*_i(t)\\
\text{subject to} \quad&\sum\limits_{i=1}^n s_ix_i \le B \\
& x_i \in \{0, 1\},\quad i=1,\cdots,n.\label{eq:kp33}
\end{align}
\end{subequations}
Solving KP is NP-hard. However, the solution to the corresponding relaxed problem where the constraints in (\ref{eq:kp33}) are replaced by $x_i\in [0,1]$ for $i=1,\ldots,n$, 
serves as an upper bound for $\sum_{i=1}^n x_i\lambda^*_i(t)$. The latter is achieved  if the ratios $\{\lambda^*_i(t)/s_{i}\}_i$ are arranged in decreasing order, say
$\lambda^*_{i_1}(t)/s_{i_1}\geq \cdots \geq\lambda^*_{i_n}(t)/s_{i_n}$ and $x_{i_j}=1$ for $1\leq j\leq a$ where $a$ is defined in Section \ref{ssec:nb}, $x_{i_{a+1}}=(B-s_{i_1} - s_{i_2}- \cdots - s_{i_{a}})/s_{i_{a+1}}$, and $x_{i_j}=0$ for $j>a+1$ \cite[Chapter 5.1]{Goodrich02}.  

Call HR-VC the rule which, at any time $t$, places in the cache the objects in decreasing order of the  ratios $\{\lambda^*_i(t)/s_{i}\}_i$ until an object does not fit in the cache; 
if object $i_{a+1}$ is the first one that cannot entirely fit  in the cache and objects $i_1,\ldots, i_a$ are already in the cache, then with probability $p_{i_{a+1}} = (B-s_{i_1} - s_{i_2}- \cdots - s_{i_{a}})/s_{i_{a+1}}$ object $i_{a+1}$ is cached.  All subsequent objects according to this decreasing ordering are not cached.
Note that HR-VC does not meet the cache size constraint  as there is not enough room in the cache to fit entirely
object $i_{a+1}$. However, as mentioned in Section \ref{sec:model}, our goal is to upper bound $\E[H^\pi_k]$ and $\E[N^\pi_k]$. Let $x^*=(x^*_1,\ldots,x^*_n)$ be the solution of (\ref{eq:kp}). We have
\begin{align}
\mathbbm{E}[ H^{HR-VC}_k\,|\, \mathcal{H}_{T_k}]&= \frac{1}{\sum_{j=1}^n \lambda^*_j(T_k)} \left[\sum_{j=1}^{a_k} \lambda^*_{i_j}(T_k) + p_{i_{a_k+1}} \lambda^*_{i_{a_k+1}}(T_k) \right]\quad
\nonumber\\
&\geq \frac{1}{\sum_{j=1}^n \lambda^*_j(T_k)} \sum\limits_{i=1}^n x^*_i\lambda^*_i(t) \nonumber\\
&\geq 
 \frac{1}{\sum_{j=1}^n \lambda^*_j(T_k)} \sum_{i=1}^n {\bf 1}(i\in B^\pi_k)\lambda^*_i(T_k)=\mathbbm{E}[ H^\pi_k\,|\, \mathcal{H}_{T_k}],
\end{align}
with $a_k$ the last job that can be entirely cached at time $T_k-$ according to the decreasing ordering of the ratios $\{\lambda^*_i(T_k)/s_{i}\}_i$.
Removing the conditioning on $ \mathcal{H}_{T_k}$ yields $\mathbbm{E}[ H^{HR-VC}_k]\geq \mathbbm{E}[ H^{\pi}_k]$.  Summing both sides of this inequality
for $k=1,\ldots,K$ gives
\begin{align}
\mathbbm{E}[ N^{HR-VC}_K]\geq \mathbbm{E}[ N^{\pi}_K].\label{eq:hr-var-sz-hit-prob}
\end{align}
\subsection{Upper Bound on Object Hit Probability for Stationary and Ergodic Object Arrival Processes}\label{sub:hrb-vs}
Using similar arguments as discussed in Section \ref{sub:hrb},  one can define the object hit probability under HR-VC (cf. Section \ref{sub:01}) as
\begin{equation}
\label{def-h2-vs1}
h^{HR-VC}=  \lim_{K\to\infty}\frac{1}{K}\sum_{k=0}^{K-1}\mathbbm{E}[ H^{HR-VC}_k] = \lim_{K\to\infty}\frac{\mathbbm{E}[ N^{HR-VC}_K]}{K},
\end{equation}
Now combining  \eqref{eq:hr-var-sz-hit-prob} and \eqref{def-h2-vs1} we obtain
\begin{equation}
\label{def-h2-vs2}
h^\pi= \lim_{K\to\infty}\frac{\E[N^\pi_K]}{K} \le \lim_{K\to\infty}\frac{\mathbbm{E}[ N^{HR-VC}_K]}{K} = h^{HR-VC}.
\end{equation}

\section{Specific Request Arrival Processes}\label{sec:disc}
Below we consider four specific object request processes each with equal size, some of which explicitly account for the temporal locality in requests for objects.  Note that unlike in Sections \ref{ssec:Poisson}, \ref{subsec:on-off}, \ref{subsec:snm} (Poisson, on-off, shot noise) requests to {\em different} objects are correlated in Section \ref{subsec:mmpp} (MMPP).

\subsection{Poisson Process}
\label{ssec:Poisson}
We consider the case when successive requests to object $i$ ($i=1,\ldots,n$) occur according to a Poisson process with rate $\lambda_i>0$ and these $n$ Poisson processes are mutually independent. This is the standard Independence Reference Model (see Section \ref{ssec:IRM})  where references to all objects are independent rvs.
Without loss of generality assume that $\lambda_1\geq \cdots \geq \lambda_n$.

Under HR , at any time only objects $1,\ldots,B$ are in the cache. Therefore, the hit probability $h^{HR}_i$ for object $i$ is $h^{HR}_i={\bf 1}(i\leq B)$
and the hit rate $r^{HR}_i$ for object $i$ is $r^{HR}_i=\lambda_i {\bf 1}(i\leq B)$. The overall hit probability $h^{HR}$  and hit rate $r^{HR}$ are given by
\[
h^{HR}=\frac{1}{\Lambda}\sum_{i=1}^B\lambda_i ,\quad r^{HR}= \sum_{i=1}^B \lambda_i, 
\]
where $\Lambda:=\sum_{i=1}^n \lambda_i$.
\subsection{On-Off Request Process}\label{subsec:on-off}
\label{ssec:onoff}
\begin{figure}[!htbp]
\centering
\includegraphics[width=0.4\linewidth]{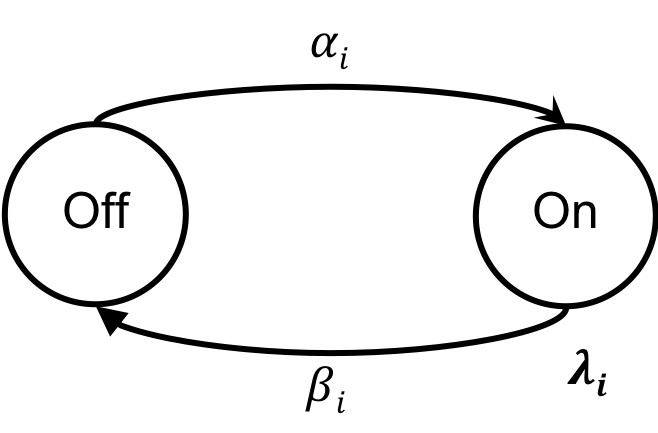}
\vspace{-0.1in}
\caption{On-Off Request Process}
\label{onoff}
\vspace{-0.1in}
\end{figure} 
The object popularity dynamics in caching systems can be effectively captured by using a stationary, on-off traffic model \cite{Garetto2015}. 
More specifically, we assume that successive requests to object $i$ occur according to a Poisson process with rate $\lambda_i>0$ 
when the underlying on-off process depicted in Figure \ref{onoff} is in state $1$ ($X_i=1$) and that no request occurs when this process is in state $0$ ($X_i=0$). 
The stationary distribution of this on-off process is given by $\boldsymbol \pi_i := [\pi_{i,0},\pi_{i,1}] = [\beta_i/(\alpha_i+\beta_i),\alpha_i/(\alpha_i+\beta_i)]$.
We assume that these $n$ on-off processes are mutually independent.
Without loss of generality, assume that $\lambda_1\ge\lambda_2\ge\ldots\ge \lambda_n$ and define $\Lambda=\sum_{i=1}^n \lambda_i$. Below, we derive expressions for the hit rate and hit probability under HR. 

Due to the way HR behaves, we may assume without loss of generality that object $i$ is never in the cache when $X_i=0$. 
With this, at any time at most  $B$ of the most popular objects are in the cache among all objects whose associated on-off process is in state $1$. Therefore, if $i>B$ the hit probability
$h_i^{HR}$ for object $i$ is given by
\begin{align*}
h^{HR}_i &= \Prob(\text{at most  $B-1$ on-off processes are in state $1$ among on-off processes $1,\ldots,i-1$})\nonumber\\
&= \sum_{k=0}^{B-1}\Prob(\text{exactly $k$ on-off processes are in state $1$ among on-off processes $1,\ldots,i-1$})\nonumber\\
&= \sum_{k=0}^{B-1}\sum_{i_1,\ldots,i_k \in \{1,2,\ldots,i-1\}\atop i_1<i_2<\cdots <i_k} \prod_{l=1}^k \pi_{i_l,1}\prod_{m\in \{1,\ldots,i-1\}-(i_1,\ldots,i_k)} \pi_{i_m,0},
\label{hHR}
\end{align*}
and $h^{HR}_i=1$ if $i\leq B$. The hit rate $r^{HR}_i$ for object $i$ is $r^{HR}_i=\lambda_i \pi_{i,1} h^{HR}_i$.

The overall hit probability $h^{HR}$ and the overall hit rate $r^{HR}$ are given by 
\begin{equation}
h^{HR}=\sum_{i=1}^n  \frac{\lambda_i \pi_{i,1}}{\sum_{j=1}^n \lambda_j \pi_{j,1}}h^{HR}_i\quad \hbox{and}\quad
r^{HR}=h^{HR}\sum_{i=1}^n \lambda_i \pi_{i,1}.
\label{hit-onoff}
\end{equation}
Assume that the ratios $\alpha_i/(\alpha_i+\beta_i)$ and $\beta_i/(\alpha_i+\beta_i)$ do not depend on $i$ and define $\rho=\alpha_i/(\alpha_i+\beta_i)$ for all $i$. 
This occurs, for instance, if all $n$ on-off processes have the same transition rates with $\alpha_i=\alpha$ and $\beta_i=\beta$ or if $\alpha_i=\alpha \theta_i$ and 
$\beta_i =\beta \theta_i$ for all $i$. Then, for $i>B $,
\begin{align*}
h^{HR}_i &= \sum_{k=0}^{B-1}\sum_{i_1,\ldots,i_k \in \{1,2,\ldots,i-1\}\atop i_1<i_2<\cdots <i_k} \rho^k (1-\rho)^{i-1-k} 
=(1-\rho)^{i-1} \sum_{k=0}^{B-1} \left(\frac{\rho}{1-\rho}\right)^k {i-1\choose k},
\end{align*}
so that
\[
h^{HR} = \frac{\sum_{i=1}^B \lambda_i}{\Lambda}+\frac{1}{\Lambda}\sum_{i=B+1}^n\lambda_i (1-\rho)^{i-1} \sum_{k=0}^{B-1} \left(\frac{\rho}{1-\rho}\right)^k {i-1\choose k},
\]
and $r^{HR}=\rho\sum_{i=1}^B \lambda_i +\rho \sum_{i=B+1}^n\lambda_i (1-\rho)^{i-1} \sum_{k=0}^{B-1} \left(\frac{\rho}{1-\rho}\right)^k {i-1\choose k}$. 

We now propose a recursive approach for computing the hit probability and the hit rate with a much lower computational complexity than the general formulas in
(\ref{hit-onoff}). 

The recursions are based on available objects in the catalog, starting from the situation where only object $1$ is available,  moving to the situation
where objects $1$ and $2$ are available, etc. up  to the final situation where all $n$ objects are available. Introduce the following variables,
\begin{align*}
p_{l,k}&=\Prob[\text{cache occupancy is }k  \,|\, \text{catalog is composed of $l$ more popular objects}],\nonumber\\
r_{l,k}&= \text{Hit rate when cache occupancy is $k$ given catalog is composed of $l$ more popular objects }.
\end{align*}
When $l=1$ then $p_{1,0}=\pi_{1,0}$, $p_{1,1}=\pi_{1,1}$, $r_{1,0}=0$, and $r_{1,1}=\lambda_1$ from our convention that object $1$ is not in the cache when $X_1=0$.
It is easy to verify that under HR the following recursions hold true for the occupancy probabilities,
\begin{align}
p_{l,0} &= p_{l-1,0}\pi_{l,0},\quad l = 2,\cdots, n,\nonumber\\
p_{k,k} &= p_{k-1,k-1}\pi_{k,1},\quad k=1,\ldots, B ,\nonumber\\
p_{l,k} &= p_{l-1,k-1}\pi_{l,1} + p_{l-1,k,}\pi_{l,0},\quad 0 < k < \min(l,B), l = 1,\cdots, n,\nonumber\\
p_{l,B} &= p_{l-1,B-1}\pi_{l,1} + p_{l-1,B},\quad l=B+1,\cdots,n.
\end{align}
Similarly, the  following recursions hold true for the hit rates,
\begin{align}
r_{l,0} &= 0,\quad l = 2,\cdots, n,\nonumber\\
r_{k,k} &= r_{k-1,k-1}+ \lambda_{k},\quad  k=2,\ldots B,\nonumber\\
r_{l,k} &=\frac{p_{l-1,k-1}\pi_{l,1}(r_{l-1,k-1}+\lambda_{l}) + p_{l-1,k}\pi_{l,0}r_{l-1,k}}{p_{l-1,k-1}\pi_{l,1} + p_{l-1,k}\pi_{l,0}},\quad 0 < k < \min(l,B), l = 1,\cdots, n,\nonumber\\
r_{l,B} &= \frac{p_{l-1,B-1}\pi_{l,1}(r_{l-1,B-1}+\lambda_{l}) + p_{l-1,B}r_{l-1,B}}{p_{l-1,B-1}\pi_{l,1} + p_{l-1,B}},\quad l=B+1,\cdots,n.
\end{align}
Once the above recursions have been solved, the overall hit rate $r^{H}$ and hit probability $h^{HR}$ under HR are given by
\begin{align}
r^{HR} = \sum\limits_{k=1}^B p_{n,k}r_{n,k}\quad \hbox{and}\quad h^{HR} = \frac{r^{HR}}{\sum\limits_{l=1}^n\lambda_l\pi_{l,1}}.
\end{align}
%
%
%
\subsection{Markov Modulated Poisson Process}\label{subsec:mmpp}

We assume that the environment is modulated by a stochastic process  $\{X(t), t\geq 0\}$ taking values in a denumerable set ${\mathcal E}$.  
Let $0<t_1<t_2<\cdots$ be the successive jump times of the process  $\{X(t)\}_t$. We assume that $X(0)$ is known. 

Let $0<T_{i,1}<T_{i,2}<\cdots< T_{i,k}<\cdots$ be the sucessive times when object $i$ is requested for $k=1,2,\ldots$. 
We assume that (by convention $T_{i,0}=0$ for all $i$)
\begin{align}
&\Prob\left( T_{1,k}-T_{1,k-1}> y_1,\ldots,   T_{n,k}-T_{n,k-1}>y_n\,|\, T_{1,k-1}, \ldots, T_{n,k-1}, X(s), 0\leq s\leq \max_{1\leq i\leq n} (T_{i,k-1}+y_i)\right)\nonumber\\
&= \prod_{i=1}^n \exp\left(-\int_{T_{i,k-1}}^{T_{i,k-1}+y_i} \lambda_i({X(s))} ds\right), \label{def-request-processes}
\end{align}
with $\lambda_i(x)>0$ for all $i=1,\ldots,n$ and $x\in {\mathcal E}$.  

In words, in $[t_m, t_{m+1})$ the object request processes $\{T_{1,k}\}_k, \ldots, \{T_{n,k}\}_k$ 
are mutually independent Poisson processes with intensities $\lambda_1(X(t_m)), \ldots, \lambda_n(X(t_m))$, respectively.\\

We assume that the cache policy knows the state of the environment at any time. 
Given that a request is made at time $t$, the object requested  at time $t$ is object $i$ with the probability
\begin{equation}
p_i(t)= \frac{\lambda_i(x)}{\sum_{j=1}^n \lambda_j(x)},
\label{pit}
\end{equation}
if $X(t)=x$.

We denote by ${\rm HR}$ the cache policy which at any time $t$ caches the $B$ objects with the largest $\lambda_i(X(t))$.   
Let $B_k^\pi$ denote the state of the cache just before the $k$th object is requested at time $T_k$ under policy $\pi$. 

Under $\pi$ there is a hit at time $T_k$  if the requested object is in the cache ($H^\pi_k=1$) and a miss if not ($H^\pi_k=0$).  Hence, under $\pi$ the number of hits, $N^\pi_K$, after $K$ requests is $N_K^\pi=\sum_{k=1}^K H^\pi_k$.


\begin{lemma} \label{lemma:mmpp}
For any admissible caching policy $\pi$,
\[
\E[N^{\rm HR}_K]\geq \E[N^\pi_K], \quad \forall K\geq 1.
\]
\end{lemma}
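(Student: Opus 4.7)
The plan is to follow the template of Lemma~\ref{lem1}, adapting it to the MMPP setting where the common environment $\{X(t)\}_{t\geq 0}$ couples the object-request processes. The enriched history now includes the environment trajectory up to $T_k$; denote it by $\mathcal{F}_{T_k}=\mathcal{H}_{T_k}\vee\sigma(X(s),\,s\leq T_k)$. Since an admissible policy is assumed to observe the environment state, the cache contents $B_k^\pi$ just before the $k$th request are $\mathcal{F}_{T_k}$-measurable for every $\pi$.

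First I would verify the analogue of (\ref{pi-t}): given $\mathcal{F}_{T_k}$ and that a request occurs at time $T_k$, the requested object equals $i$ with probability $p_i(T_k)=\lambda_i(X(T_k))/\sum_{j=1}^n\lambda_j(X(T_k))$. This follows from (\ref{def-request-processes}) because, conditionally on $\{X(s)\}_{s\leq T_k}$ and on the history of the $n$ request streams, the residual inter-request times are independent exponential-type variables whose competing-risks argument yields exactly (\ref{pit}); this is simply the classical fact that in a superposition of independent (inhomogeneous) Poisson processes, the next arrival comes from stream $i$ with probability proportional to its current intensity. Using this together with $\mathcal{F}_{T_k}$-measurability of $B_k^\pi$, I would obtain
\begin{equation*}
\E[H_k^\pi\,|\,\mathcal{F}_{T_k}]=\sum_{i\in B_k^\pi}\frac{\lambda_i(X(T_k))}{\sum_{j=1}^n\lambda_j(X(T_k))}.
\end{equation*}

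Next I would invoke the definition of HR: at time $T_k$ it selects the $B$ objects with the largest $\lambda_i(X(T_k))$, so $B_k^{\mathrm{HR}}$ maximizes $\sum_{i\in S}\lambda_i(X(T_k))$ over all $S\in S_B$. Hence
\begin{equation*}
\E[H_k^{\mathrm{HR}}\,|\,\mathcal{F}_{T_k}]\geq \E[H_k^\pi\,|\,\mathcal{F}_{T_k}],\qquad k\geq 1.
\end{equation*}
Taking expectations removes the conditioning and gives $\E[H_k^{\mathrm{HR}}]\geq \E[H_k^\pi]$; summing for $k=1,\ldots,K$ and using $N_K^\pi=\sum_{k=1}^K H_k^\pi$ yields the stated inequality.

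The one step that requires care, and which I view as the main obstacle, is justifying the conditional probability formula (\ref{pit}) under $\mathcal{F}_{T_k}$ rather than just $\sigma(X(s),\,s\leq T_k)$. The subtlety is that conditioning additionally on the event ``a request occurs at $T_k$'' must be handled via Palm calculus (or via the standard limit $\Prob(\text{next event in }(t,t+dt)\text{ and is of type }i\mid\mathcal{F}_t)=\lambda_i(X(t))\,dt+o(dt)$), so that the ratio (\ref{pit}) falls out. Once this conditional identity is in place, the remainder of the argument is a verbatim copy of the proof of Lemma~\ref{lem1}, with $\lambda_i^*(T_k)$ replaced by $\lambda_i(X(T_k))$ throughout.
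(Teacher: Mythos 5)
Your proposal is correct and follows essentially the same route as the paper's proof in Appendix~\ref{sub-app-mmpp-lemma}: condition on the environment state and history at $T_k$, use the competing-intensities identity (\ref{pit}) to write $\E[H_k^\pi\,|\,\cdot]$ as the normalized sum of $\lambda_i(X(T_k-))$ over $B_k^\pi$, invoke the defining maximality of HR, then remove the conditioning and sum over $k$. Your treatment is in fact slightly more careful than the paper's, which conditions only on $(T_k, X(T_k-))$ even though $B_k^\pi$ depends on the full history, and which asserts (\ref{pit}) without the Palm/limit justification you flag.
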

{\bf Proof.}
Proof is given in Appendix \ref{sub-app-mmpp-lemma}.  
\hfill\done\\

Assume that $\{X(t)\}_t$  has a stationary distribution independent of its initial state, denoted by  $\{\gamma(x), x\in {\mathcal E}\}$. Let $h^\pi$ denote the stationary hit probability under $\pi$.  We have
\begin{eqnarray}
h^\pi&=&\lim_{k\to\infty} \E[H^\pi_k] = \lim_{k\to\infty}\sum_{x\in {\mathcal E}} \int_{t=0}^\infty dP(T_k<t | X(T_k-)=x) \Prob(X(T_k-)=x) \frac{\sum_{i\in B_k^\pi}\lambda_i(x)}{ \sum_{j=1}^n \lambda_j(x)}\nonumber\\
&=& \lim_{k\to\infty}\sum_{x\in {\mathcal E}} \Prob(X(T_k-)=x) \frac{\sum_{i\in B_k^\pi}\lambda_i(x)}{ \sum_{j=1}^n \lambda_j(x)}\nonumber\\
&=&\sum_{x\in {\mathcal E}} \gamma(x)  \frac{\sum_{i\in B^\pi}\lambda_i(x)}{ \sum_{j=1}^n \lambda_j(x)},
\label{id-hpi}
\end{eqnarray}
with $B^\pi$ the stationary version of $B^\pi_k$.
\begin{theorem}
\[
h^{\rm HR}\geq \max_{\pi} h^\pi.
\]
\end{theorem}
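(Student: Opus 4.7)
The plan is to mimic the proof strategy used for the equal-size theorem in Section \ref{sub:hrb}, but applied in the MMPP setting via Lemma \ref{lemma:mmpp}. The key observation is that the identity \eqref{id-hpi} already expresses $h^\pi$ as the limit $\lim_{k\to\infty}\E[H_k^\pi]$, so by a Ces\`aro averaging argument we can rewrite it as the limit of $\E[N_K^\pi]/K$, at which point the bound on expected hits provided by Lemma \ref{lemma:mmpp} transfers directly to the hit probability.

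More concretely, I would first verify that the limit $\lim_{k\to\infty}\E[H_k^\pi]$ in \eqref{id-hpi} exists under the stated assumption that $\{X(t)\}_t$ has a stationary distribution $\{\gamma(x)\}_{x\in \mathcal{E}}$ independent of initial state; this is precisely what the last two equalities in \eqref{id-hpi} assert. Since $H_k^\pi\in\{0,1\}$ is bounded, the Ces\`aro lemma gives
\[
h^\pi=\lim_{k\to\infty}\E[H_k^\pi]=\lim_{K\to\infty}\frac{1}{K}\sum_{k=1}^K\E[H_k^\pi]=\lim_{K\to\infty}\frac{\E[N_K^\pi]}{K},
\]
where in the last step I use the definition $N_K^\pi=\sum_{k=1}^KH_k^\pi$ together with linearity of expectation. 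Applying the same chain of equalities to the HR rule yields the analogous expression for $h^{\rm HR}$.

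Second, I would invoke Lemma \ref{lemma:mmpp}, which states that $\E[N_K^{\rm HR}]\geq \E[N_K^\pi]$ for every admissible policy $\pi$ and every $K\geq 1$. Dividing by $K$ and letting $K\to\infty$ in both sides and using the identity established in the previous step gives
\[
h^{\rm HR}=\lim_{K\to\infty}\frac{\E[N_K^{\rm HR}]}{K}\geq \lim_{K\to\infty}\frac{\E[N_K^\pi]}{K}=h^\pi.
\]
Since this holds for an arbitrary admissible $\pi$, taking the maximum over $\pi$ on the right-hand side finishes the proof.

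I do not anticipate a significant obstacle: all the heavy lifting (the per-request comparison of expected hits under knowledge of the modulating state) is already packaged inside Lemma \ref{lemma:mmpp}, and the existence of the stationary hit probability is built into the hypothesis on $\{X(t)\}_t$. The only mildly delicate point is the interchange of limit and sum implicit in the Ces\`aro step; this is justified by $H_k^\pi\in\{0,1\}$, so dominated convergence (or just boundedness) applies trivially. No technicalities beyond those already handled in the equal-size stationary-ergodic case arise.
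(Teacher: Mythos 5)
Your argument is correct, but it takes a genuinely different route from the paper's. The paper proves this theorem in one line by comparing the stationary closed-form expressions: it takes the last representation in \eqref{id-hpi}, namely $h^\pi=\sum_{x\in\mathcal{E}}\gamma(x)\,\sum_{i\in B^\pi}\lambda_i(x)/\sum_{j=1}^n\lambda_j(x)$, and observes that for every modulating state $x$ the HR rule caches the $B$ objects with the largest $\lambda_i(x)$, so $\sum_{i\in B^{\rm HR}}\lambda_i(x)\ge\sum_{i\in B^{\pi}}\lambda_i(x)$ holds term by term in $x$; Lemma \ref{lemma:mmpp} is not invoked in the theorem's proof at all. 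You instead use only the first equality in \eqref{id-hpi} (existence of $\lim_k\E[H_k^\pi]$), pass to $\lim_K\E[N_K^\pi]/K$ by Ces\`aro, and then apply the finite-horizon comparison of Lemma \ref{lemma:mmpp} --- effectively transplanting the proof of the stationary hit probability theorem for equal-size objects in Section \ref{sub:hrb} to the MMPP setting. Your route has the merit of reusing Lemma \ref{lemma:mmpp} as a black box and of not leaning on the somewhat informal ``stationary version $B^\pi$ of $B^\pi_k$'' that the final equality of \eqref{id-hpi} requires; the paper's route is shorter and delivers the explicit formula \eqref{eq:closedmmpp} as a by-product. Both arguments ultimately rest on the same per-request inequality, which appears as (\ref{inq}) inside the proof of Lemma \ref{lemma:mmpp}, so neither contains more mathematical content than the other; your only real dependence on \eqref{id-hpi} is for the existence of the limit defining $h^\pi$, which you correctly flag.
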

{\bf Proof.}  Follows from (\ref{id-hpi}) and $\sum_{i\in B^{\rm HR}}\lambda_i(x) \geq \sum_{i\in B^\pi}\lambda_i(x)$ for all $x\in {\mathcal E}$. \hfill\done

We also obtain the following analytic expression for hit probability for hazard rate based upper bound,
\begin{align}
h^{\rm HR} = \sum_{x\in {\mathcal E}} \gamma(x)  \frac{\sum_{i=1}^{B}\lambda_{[i]}(x)}{ \sum_{j=1}^n \lambda_j(x)},\label{eq:closedmmpp}
\end{align}
\noindent where $\lambda_{[1]}(x) \ge \lambda_{[2]}(x) \ge \cdots \ge \lambda_{[n]}(x)$ for all $x\in {\mathcal E}.$

\subsection{Shot Noise Model}
\label{subsec:snm}
Another traffic model, named Shot Noise Model (SNM) \cite{Traverso2015}, has been proposed to capture the temporal locality observed in real traffic in  caching systems {\em e.g.} in Video on Demand (VoD) systems. The primary idea of the SNM is to represent the overall request process as the superposition of many independent time inhomogeneous Poisson processes or shots, each referring to an individual object. In particular, the request process for object $i$ is described by an inhomogeneous Poisson process of instantaneous rate
\begin{align}
\lambda_i^{inst}(t) = V_i\lambda_i(t-\tau_i),\label{inhomo-poisson}
\end{align}
\noindent where $\tau_i$ is the time instant at which object $i$ is first requested, $V_i$ denotes the expected number of requests generated by object $i$ and $\lambda_i(x)$ is the popularity profile of object $i$ over time. 
It is easy to check that the instantaneous hazard rate associated with object $i$ can be calculated as
\begin{align}
\lambda_i^{*}(t) = \lambda_i^{inst}(t),\label{inhomo-poisson2}
\end{align}
\noindent Thus the results from Sections \ref{ssec:number-hits} and \ref{sub:01} directly apply with $\lambda_i^{*}(t) = \lambda_i^{inst}(t).$

%
%
%

\section{Numerical Results}\label{sec:perf}
Via simulations we compare the stationary object hit probabilities of various online policies (Section \ref{ssec:policies}) to that of our proposed upper bound   -- referred to as {\em HR (based) upper bound} -- to B\'el\'ady's upper bound (BELADY) and to a third bound (FOO -- see Section \ref{ssec:ub}). This is done  for a number of arrival processes of object requests (Section \ref{ssec:arrival}), for equal and different sized objects
(Section \ref{ssec:size}), and for several cache sizes. We first present the experimental setup and then discuss the results.

\subsection{Experimental setup}
\label{ssec:setup}

\subsubsection{Investigated online policies}
\label{ssec:policies}
Several caching policies have been used to generate Figures \ref{plots}-\ref{real-trace}.
The well-known LRU, FIFO, and RANDOM  cache replacement policies discard the least recently used items first,  evicts objects in the order they were added, and randomly selects an object and discards it to make space when necessary, respectively. The STATIC policy keeps forever in the cache the $B$ objects which have the largest arrival rates. 
Notice that the HR based bound and the hit probability under STATIC are equal when successive requests for each object follow a Poisson process (Section \ref{ssec:Poisson}). We also consider the Greedy-Dual-Size-Frequency (GDSF) policy \cite{GDSF98} which combines recency with frequency and size to improve upon LRU. Last,  the AdaptSize policy \cite{Berger17} uses an adaptive size threshold with admission control preferring  admission of small sized objects.
\subsubsection{Upper bounds on object hit probability}
\label{ssec:ub}
Aside our HR based upper bound which applies to both equal and variable sized objects, two other upper bounds on the object hit probability proposed in literature are used, 
B\'el\'ady's offline upper bound (BELADY, Section \ref{ssec:offline-ub}) for equal sized objects and a flow based offline optimal  (FOO) \cite{berger18} for different sized objects. FOO upper bound is computed by representing caching as a min-cost flow problem.
\subsubsection{Arrival process of object requests in Figures \ref{plots}-\ref{real-trace}}
\label{ssec:arrival}
In each display in Figures \ref{plots}-\ref{plots-var}, request processes for objects $i=1,\ldots,n$ are independent renewal processes with  IRT distributions shown in Table \ref{tbltraff}. 
More specifically, 
in Figure \ref{plots}(a) (resp. Figures \ref{plots}(b)-\ref{plots}(f)) the request process for object $i=1,\ldots,n$ has an exponential IRT (resp. Generalized Pareto, Uniform, Hyperexponential, Gamma, Erlang) with arrival rate $\lambda_i$ drawn from a Zipf distribution with parameter $0.8$ (see  last column of Table \ref{tbltraff}); similarly, in Figure \ref{plots-var}(a) 
(resp. Figures \ref{plots-var}(b)-(c)) the IRT has an exponential (resp. Generalized Pareto, Uniform) distribution with arrival rate $\lambda_i$ drawn from a Zipf distribution with parameter $0.8$.

\begin{table*}[h]
\resizebox{\columnwidth}{!}{
\begin{tabular}{|c| c| c| c| }
\hline
  Inter-request time &Hazard Rate& $\Prob(IRT_i<t)$ &Arr. rate $\lambda_i$ ($=1/\E[IRT_i]$)\\
distribution  (IRT)&& &drawn from Zipf\,(0.8)\\
\hline
Exponential&CHR&$1-e^{-\lambda_it}$&$\lambda_i$\\
\hline
Generalized Pareto&  DHR&    $1 -(1+\frac{k_it}{\sigma_i})^{-\frac{1}{k_i}}$, $k_i=0.48$ & $\frac{1-k_i}{\sigma_i}$\\
\hline
Hyperexponential$^\star$& DHR&   $1-\sum\limits_{j=1}^{2}p_{ji}e^{-\theta_{j,i}t}$  & \\
& & $p_{1,i}+p_{2,i}=1$ & $\frac{1}{2\nu_i}$\\
&  &  $p_{1,i}/\theta_{1,i}=p_{2,i}/\theta_{2,i}:=\nu_i$ &\\ 
&  &  $SCV_i=\text{var}(IRT_i)/\E[IRT_i]^2=2$  &\\
\hline
Uniform&IHR&   $\frac{t}{b_i}$ &   $\frac{2}{b_i}$\\
\hline
Gamma &  DHR ($k_i<1$)&   $\frac{1}{\Gamma(k_i)}\gamma(k_i,\frac{t}{\theta_i})$, $k_i=0.5$ &$\frac{2}{\theta_i}$\\
\hline
Erlang  &  IHR&  $\frac{\gamma(k_i,\mu_it)}{(k_i-1)!}$, $k_i=0.2$ &$\frac{\mu_i}{2}$ \\
\hline
\end{tabular}
}
\caption{Inter-request time (IRT) distributions of the renewal request arrival  processes in Figures \ref{plots}-\ref{plots-var} and their properties
(CHR = Constant hazard rate, IHR = Increasing hazard rate, DHR = Decreasing hazard rate). 
$^\star$Once arrival rate $\frac{1}{2\nu_i}$ is known one finds $p_{1,i}=(1- \sqrt{(SCV_i-1)/(SCV_i+1)})/2$ under the constraints.}
\label{tbltraff}
\vspace{-0.2in}
\end{table*}
In Figure \ref{on-off-trace}(a)-(b) the  arrival request process for object $i$ ($i=1,\ldots,n$) is generated via an on-off process (see Section \ref{ssec:onoff}) 
and these $n$ on-off processes are mutually independent. The transition rates for on-off process $i$ are $\alpha_{i} = 1/T_{OFF}$ and $\beta_{i} = 1/T_{ON}$,
with $T_{ON} = 7$ (days) and $T_{OFF} = 9T_{ON}$. The arrival rate $\lambda_i$ in the on-state is given by $\lambda_i = V/T_{ON}$, where
$V$ is drawn from a Pareto distribution with pdf $f_{V} (v) = \beta V_{min}^\beta/v^{1+\beta}$, $\mathbb{E}[V] = 10$, and $\beta = 2$  \cite{Garetto2015}.\\

In Figure \ref{mmpp} requests for objects are generated according to a two-state MMPP (see Section \ref{subsec:mmpp}).
Without loss of generality (W.l.o.g.), call $1$ and $2$ these two states. Let $\alpha$ and $\beta$ be the state transition rate from state $1$ to $2$ and from state $2$ to $1$, respectively. The stationary state probabilities are  $\gamma(1) =\beta/(\alpha+\beta)$ and  $\gamma(2) =\alpha/(\alpha+\beta).$  
In the simulations, we took
$\alpha=2\times 10^{-3}$ and $\beta=1.6\times 10^{-3}$.  In state $j$, successive requests for object $i$ are generated according to a Poisson process with rate
$\lambda_i(j)$ for $j=1,2$. 
In state $1$, we assume that object arrival rates $\lambda_i(1),\ldots, \lambda_n(1)$ each follows a Zipf distribution with parameter $0.8$;
W.l.o.g assume that $\lambda_{1}(1) > \lambda_{2}(1) >\cdots > \lambda_{n}(1)$. 
In state $2$,  we assume that object arrival rates are given by  $\lambda_i(2) = \lambda_{n+1-i}(1)$ for $i=1,\cdots,n$.\\
 
In Figure \ref{shot-noise-trace} requests for objects are generated by $n$ independent shot noise processes (see Section \ref{subsec:snm}). Objects belong to four different classes. Objects in class $c$ ($c=1,\ldots,4$) become available in the system  at times $\tau_1(c)<\cdots<\tau_{n_c}(c)$ of a 
homogeneous Poisson process  with rate $\gamma_c = \mathbb{E}[V_c]/\mathbb{E}[L_c]$. The SNM associated with the $i$th object of class $c$, which becomes available at time $\tau_i(c)$, 
has intensity  $\lambda_i(t)=(V_i/\alpha_c)e^{-(t-\tau_i(c))/\alpha_c}$,  with $\alpha_c = \frac{0.5}{0.8}\mathbb{E}[L_c]$ and where $V_i$ is chosen according to a Poisson distribution with rate $\mathbb{E}[V_c]$. Values of  $\E[V_c]$ (expected number of requests for a class $c$ object) and $\E[L_c]$ (expected lifespan of a class $c$ object)
are given in Table \ref{table:sn}. This model has been obtained by the authors of \cite{Traverso2013}
from their so-called Trace 1, which contains $n=\sum_{c=1}^4 n_c=143871$ objects (cf. $4$th column of  Table \ref{table:sn}).
\begin{table}[h]
	\begin{center}
		\begin{tabular}{ |c|c|c| c| c|} 
			\hline
			Class id (c) & $\mathbb{E}[L_c]$ &  $\mathbb{E}[V_c]$ & Catalog size ($n_c$)\\ 
			\hline
			Class 1 & $1.14$ & $86.4$ & $29481$\\
			Class 2 & $3.36$ & $41.9$ &  $45570$\\
			Class 3 & $6.40$ & $59.5$ &  $27435$\\
			Class 4 & $10.53$ & $36.9$ &  $41385$\\
			\hline
		\end{tabular}
		\caption{Parameters of the shot-noise models in Figure \ref{shot-noise-trace}.}
		\label{table:sn}
	\end{center}
\end{table}

In Figure \ref{real-trace} we use requests from a Web access trace collected from a gateway router at IBM research lab \cite{zerfos13}. We filter the trace such that each object has been requested at least a hundred times. The filtered trace contains $3.5\times10^6$ requests with an object catalog of size $n = 5638$.  
Various parametric and non-parametric estimators have been developed in the literature to estimate the hazard rate \cite{Wang05, Singpurwalla1983}. Here, we adopt a parametric estimator model and assume that the inter-request times for each object are independent and identically distributed non-negative random variables. Note that the Web and storage traffic inter-request times and access patterns are well modeled by  heavy-tailed distributions \cite{DOWNEY2005790,Gracia-Tinedo2015}. Hence, we fit the density of inter-request times of each object to a Generalized-Pareto distribution using the maximum likelihood estimation technique and estimate the hazard rate for each object accordingly. 

\subsubsection{Size of objects}
\label{ssec:size}
Both objects of equal size and variable size are considered.  In the former the size of each object is equal to $1$ and in the latter the size of each object is drawn independently according to a bounded Pareto distribution with Pareto shape parameter $1.8$, minimum object size of $5$Mb and maximum object size of $15$Mb.
When all objects have  same size the size of the cache is  expressed in number of objects and it is expressed in Mb when objects have different sizes.
\subsection{Discussion}
\label{ssec:discussion}
\begin{figure*}[]
\centering
\hspace{-0.3cm}
\begin{minipage}{0.32\textwidth}
\includegraphics[width=1\textwidth]{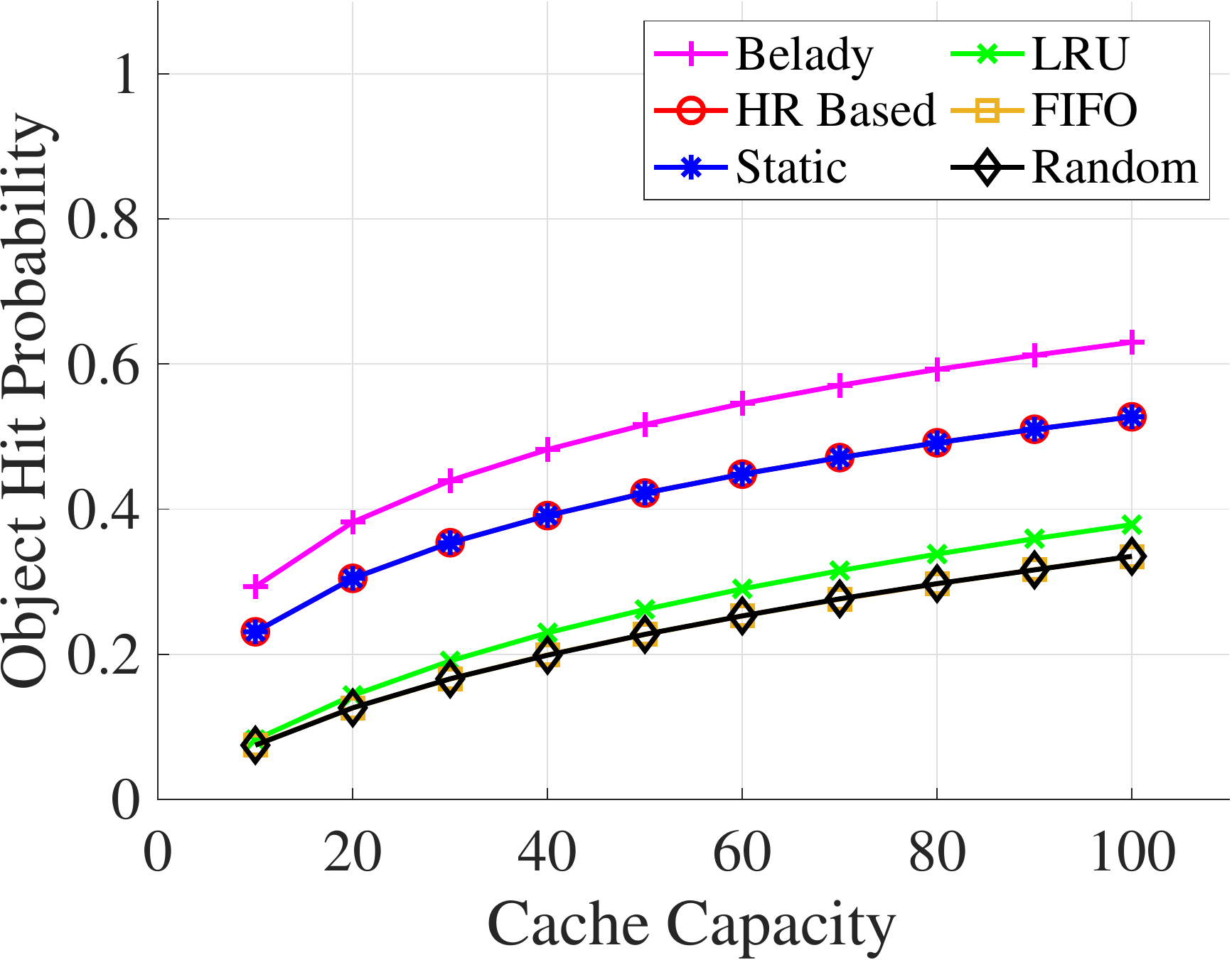}
\subcaption{Exponential (CHR)}
\end{minipage}
\begin{minipage}{0.32\textwidth}
\includegraphics[width=1\textwidth]{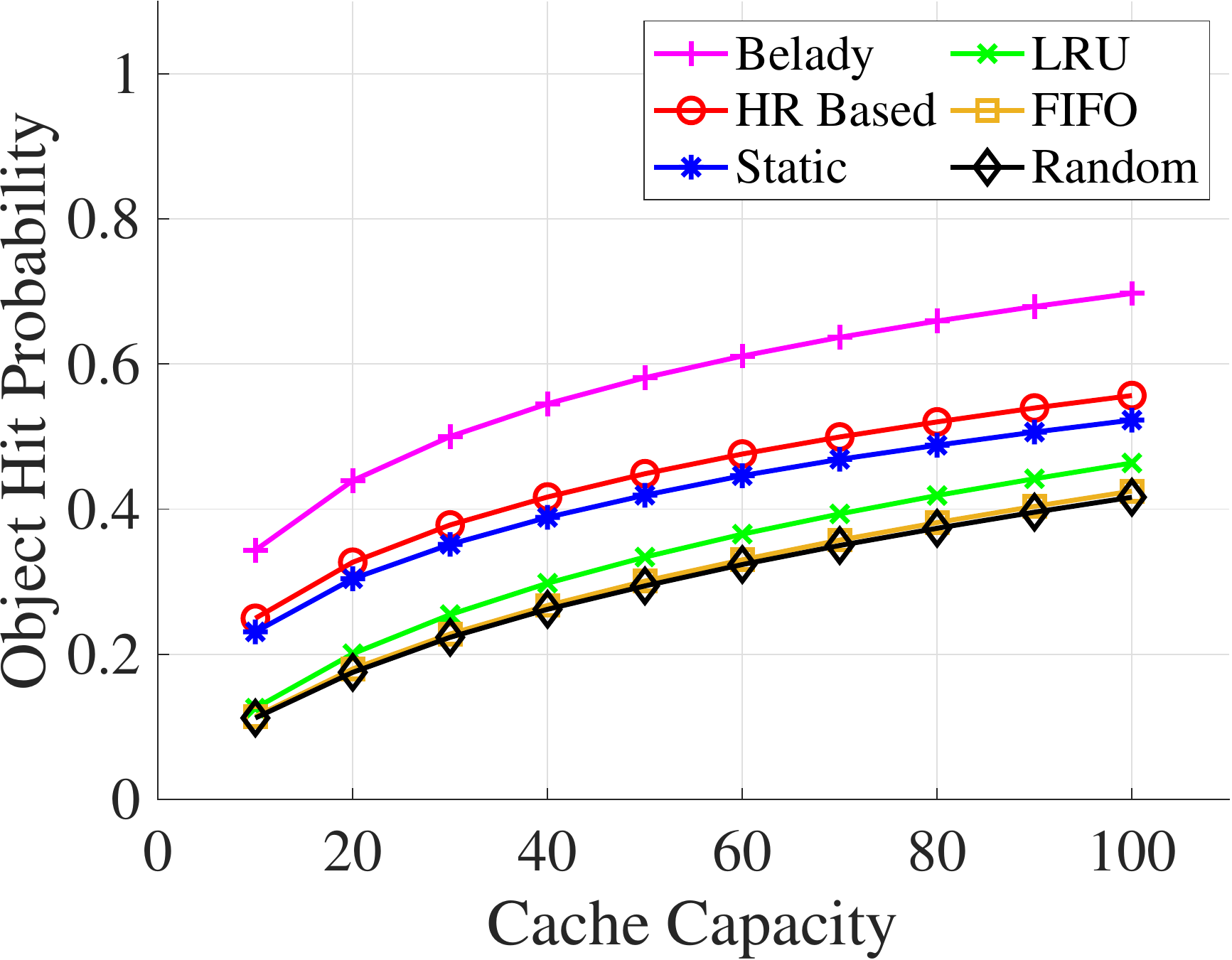}
\subcaption{Generalized Pareto (DHR)}
\end{minipage}
\begin{minipage}{0.32\textwidth}
\includegraphics[width=1\textwidth]{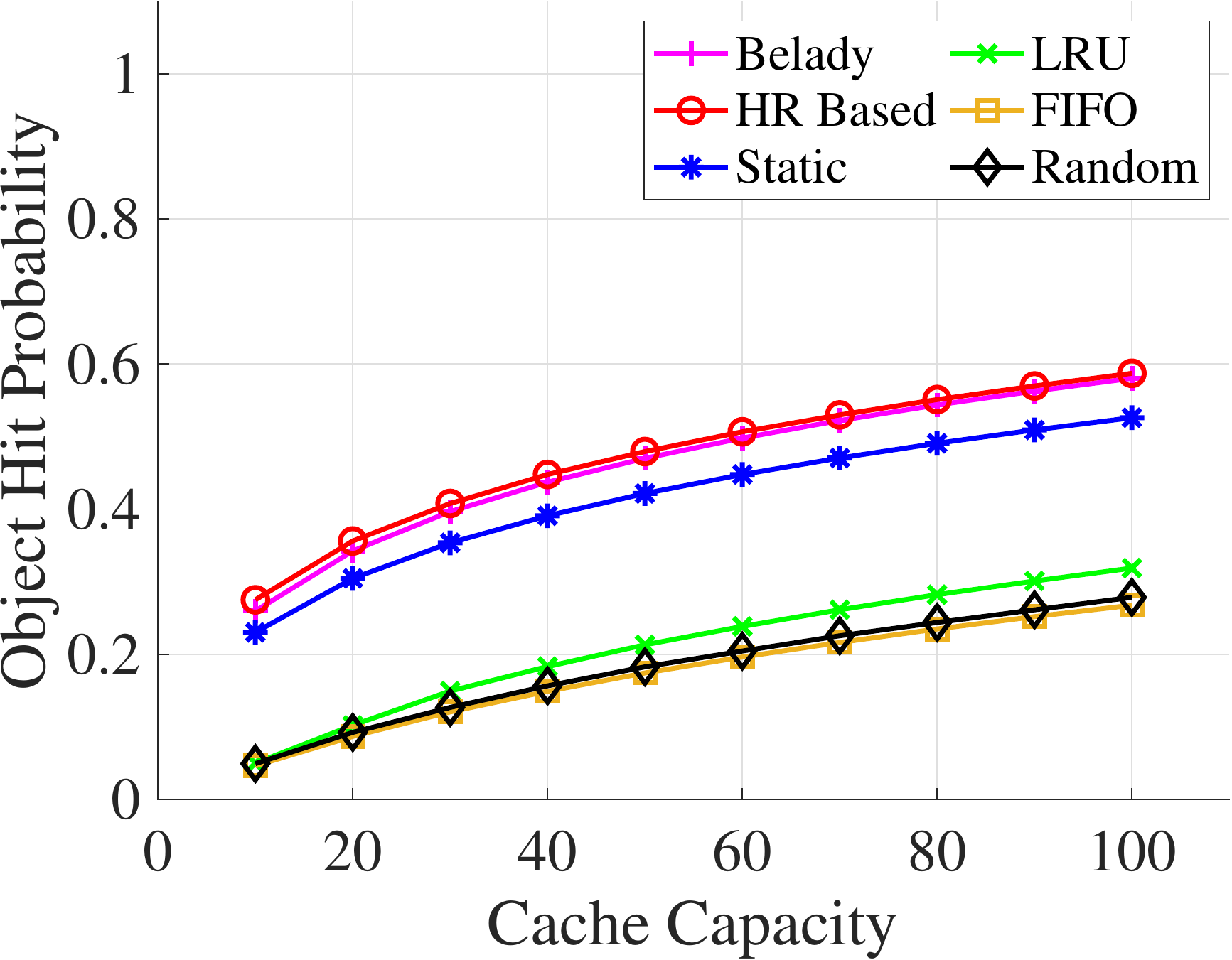}
\subcaption{Uniform (IHR)}
\label{pltuni}
\end{minipage}
\begin{minipage}{0.32\textwidth}
\includegraphics[width=1\textwidth]{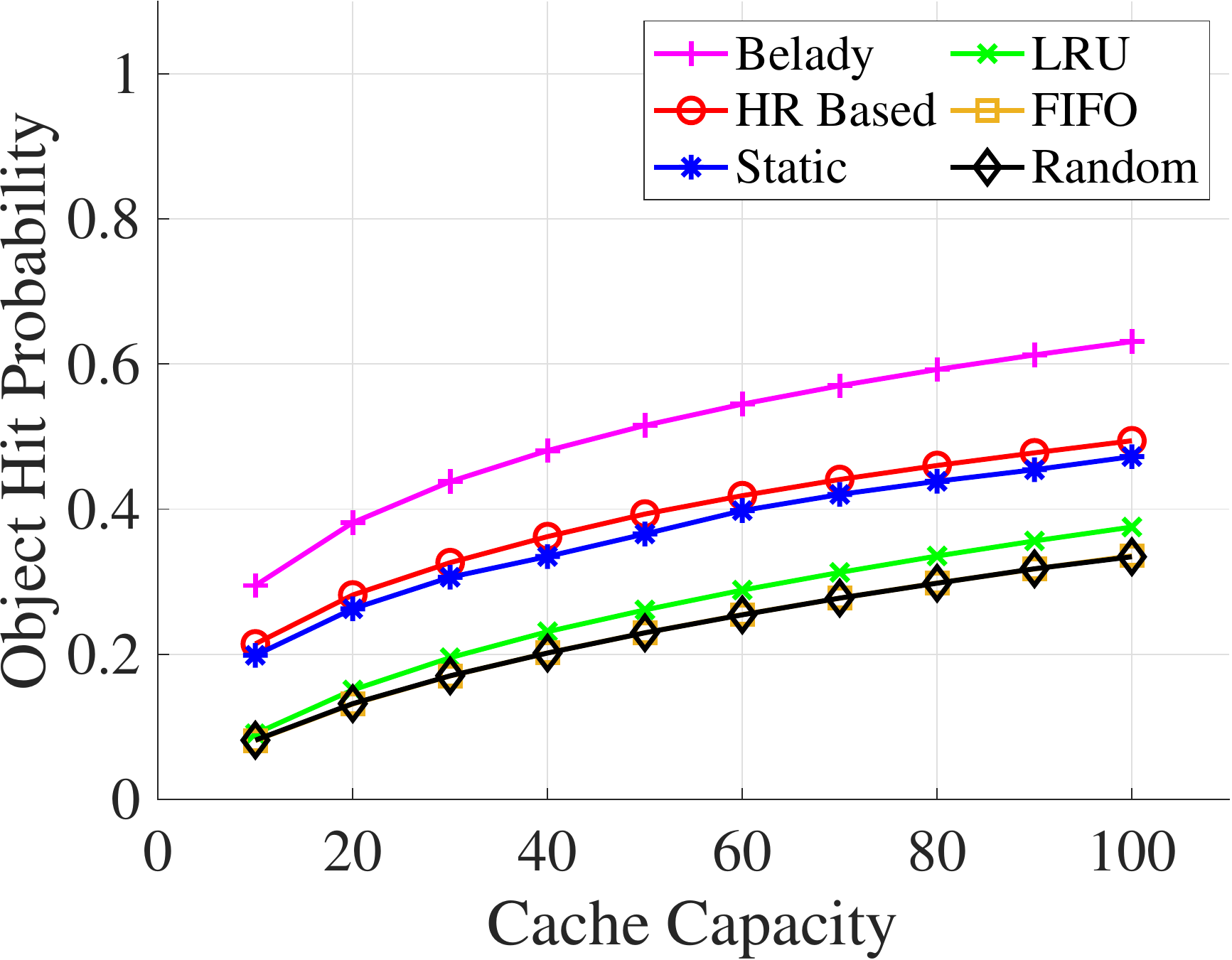}
\subcaption{Hyperexponential (DHR)}
\end{minipage}
\begin{minipage}{0.32\textwidth}
\includegraphics[width=1\textwidth]{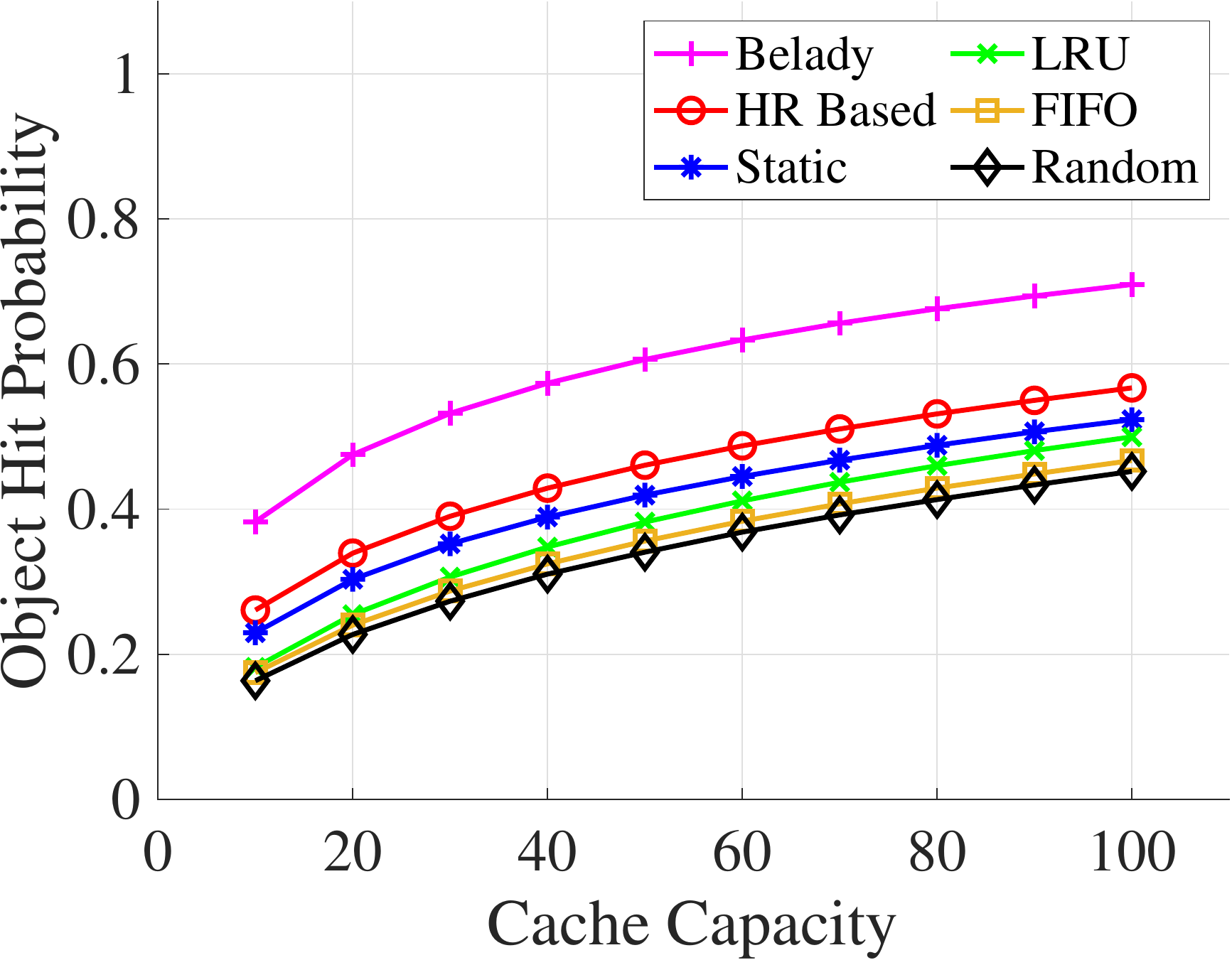}
\subcaption{Gamma (DHR)}
\end{minipage}
\begin{minipage}{0.32\textwidth}
\includegraphics[width=1\textwidth]{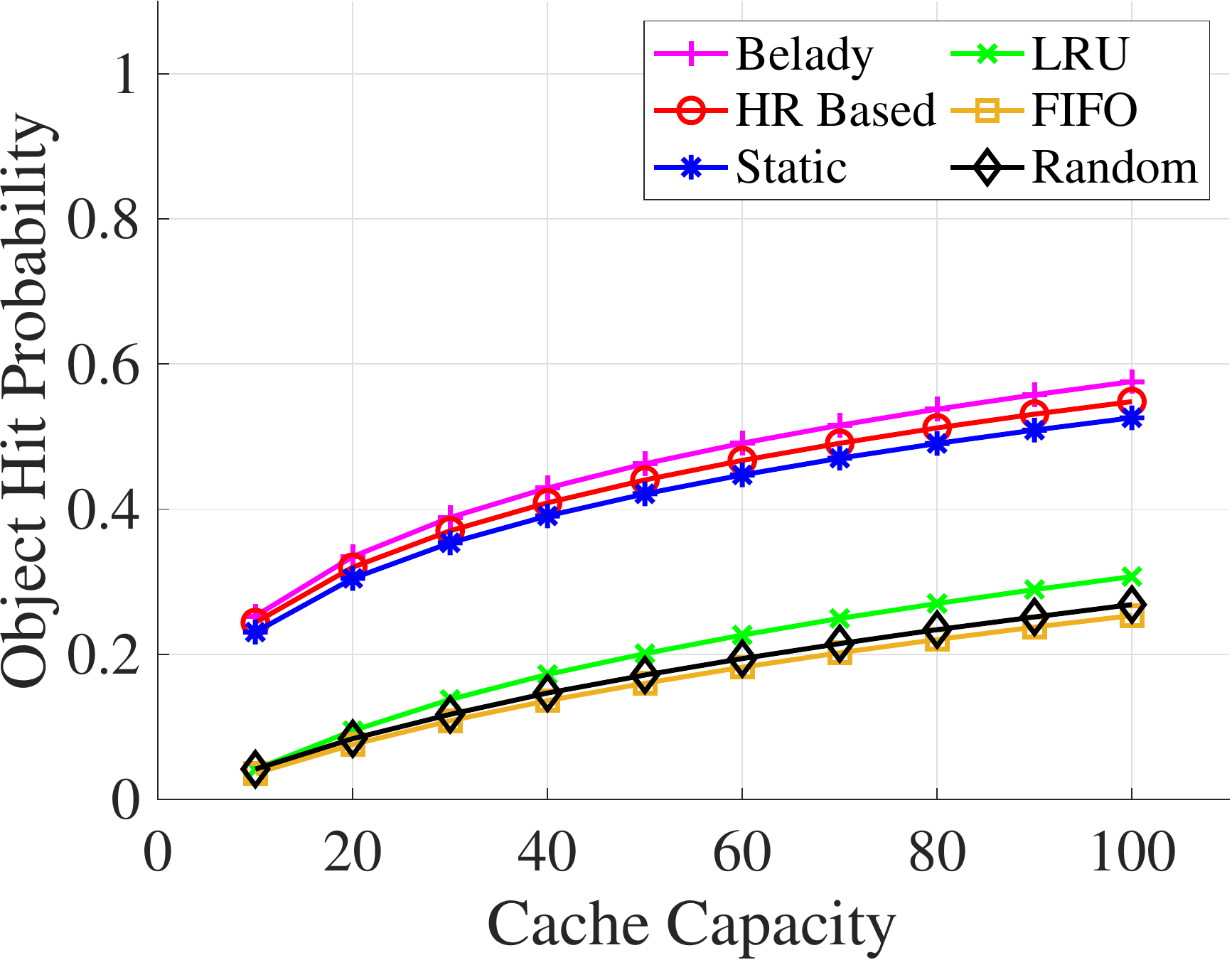}
\subcaption{Erlang (IHR)}
\end{minipage}
\vspace{-0.1in}
\caption{Simulation results for HR based upper bound and various caching policies under different inter request arrival distributions ($n=1000$, all objects of size $1$).}
\label{plots}
\end{figure*}

As a general comment, we note from Figure \ref{plots}-\ref{real-trace} that, as expected, that the HR based upper bound  serves as an upper bound on the 
hit probability among all online caching policies.  Further comments are given below on each figure.

\subsubsection{Figure \ref{plots}: Renewal request processes and equal size objects}
\label{ssec:fig2}

Request processes used to generate plots in Figure \ref{plots} are presented in Section \ref{ssec:arrival}.  These plots are obtained for $1000$ objects and all objects have size $1$.
Notice (see discussion in Section \ref{ssec:Poisson}) that results coincide in Figure \ref{plots}(a) for the STATIC policy and the HR based upper bound  when inter-request times (IRTs) are exponential distributed. We observe that when IRTs are either CHR or DHR, the HR based upper bound is much tighter than B\'el\'ady's upper bound and that  both bounds are close
when IRTs are IHR. STATIC consistently yields the highest hit probability and is always close to the HR upper bound. For exponential IRTs or, equivalently for the independence reference model, the optimality of STATIC  is well known  \cite{liu98}.
\subsubsection{Figure \ref{plots-var}: Renewal request processes and variable size objects}
\label{ssec:fig3}
Request processes used to generate plots in Figure \ref{plots-var} are presented in Section \ref{ssec:arrival}.   Objects have variable  sizes (see Section \ref{ssec:size})
and there are $100$ objects. We observe that when IRTs have exponential or Generalized Pareto distributions  the HR based upper bound is much tighter than the FOO upper bound 
(Figure \ref{plots-var}(a)-(b)) and that  both bounds are close  when IRTs are uniformly distributed rvs (Figure \ref{plots-var}(c)).  For exponential and Generalized Pareto IRT distributions
the GDSF policy performs well (close to HR); one way of interpreting the gap between HR (resp. FOO) and
GDSF in Figure \ref{plots-var}(c) is to say that there is room for improvement in caching policy performance when IRTs are uniformly distributed rvs.


\begin{figure*}[h]
\vspace{-0.1in}
\centering
\hspace{-0.3cm}
\begin{minipage}{0.32\textwidth}
\includegraphics[width=1\textwidth]{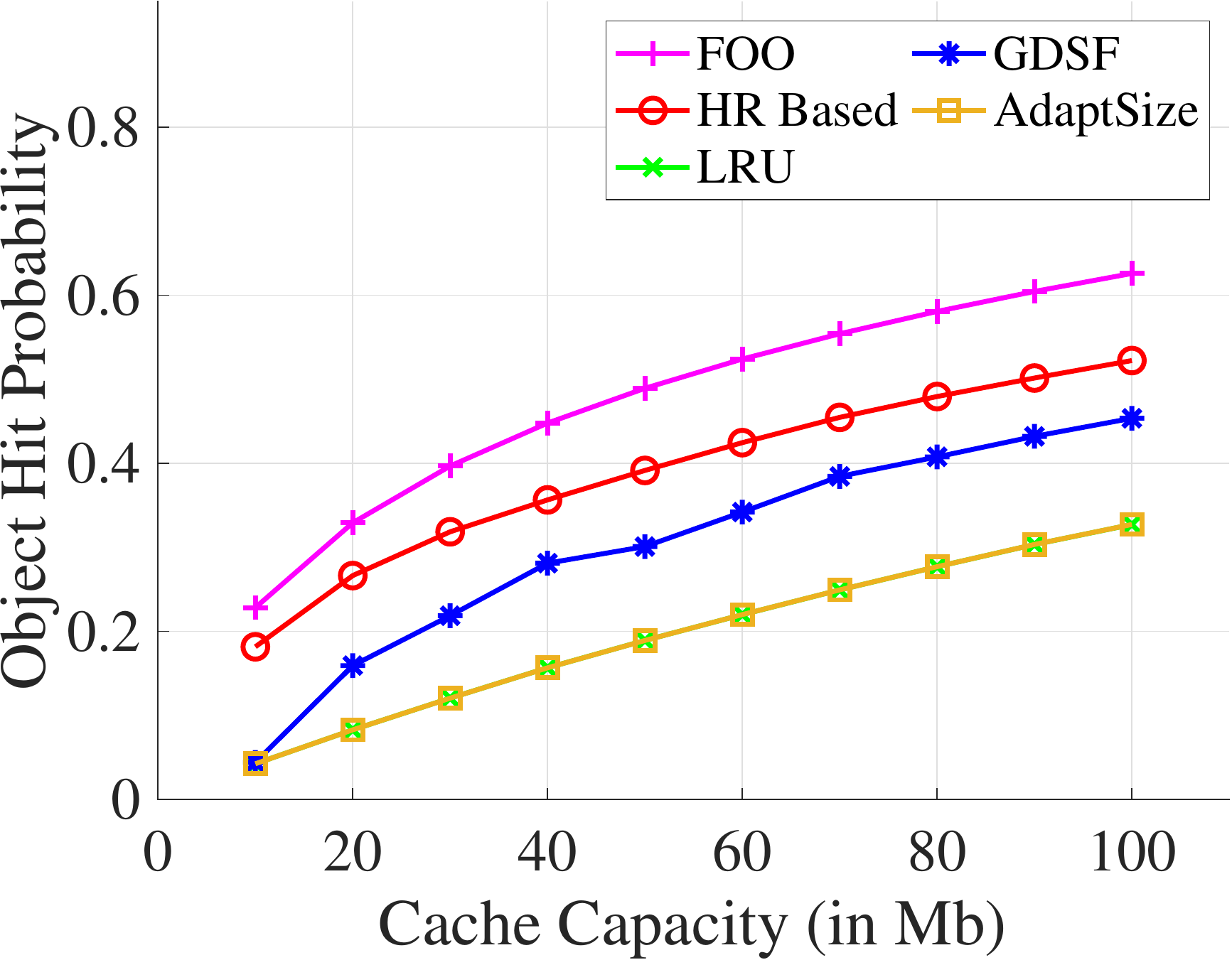}
\subcaption{Exponential (CHR)}
\end{minipage}
\begin{minipage}{0.32\textwidth}
\includegraphics[width=1\textwidth]{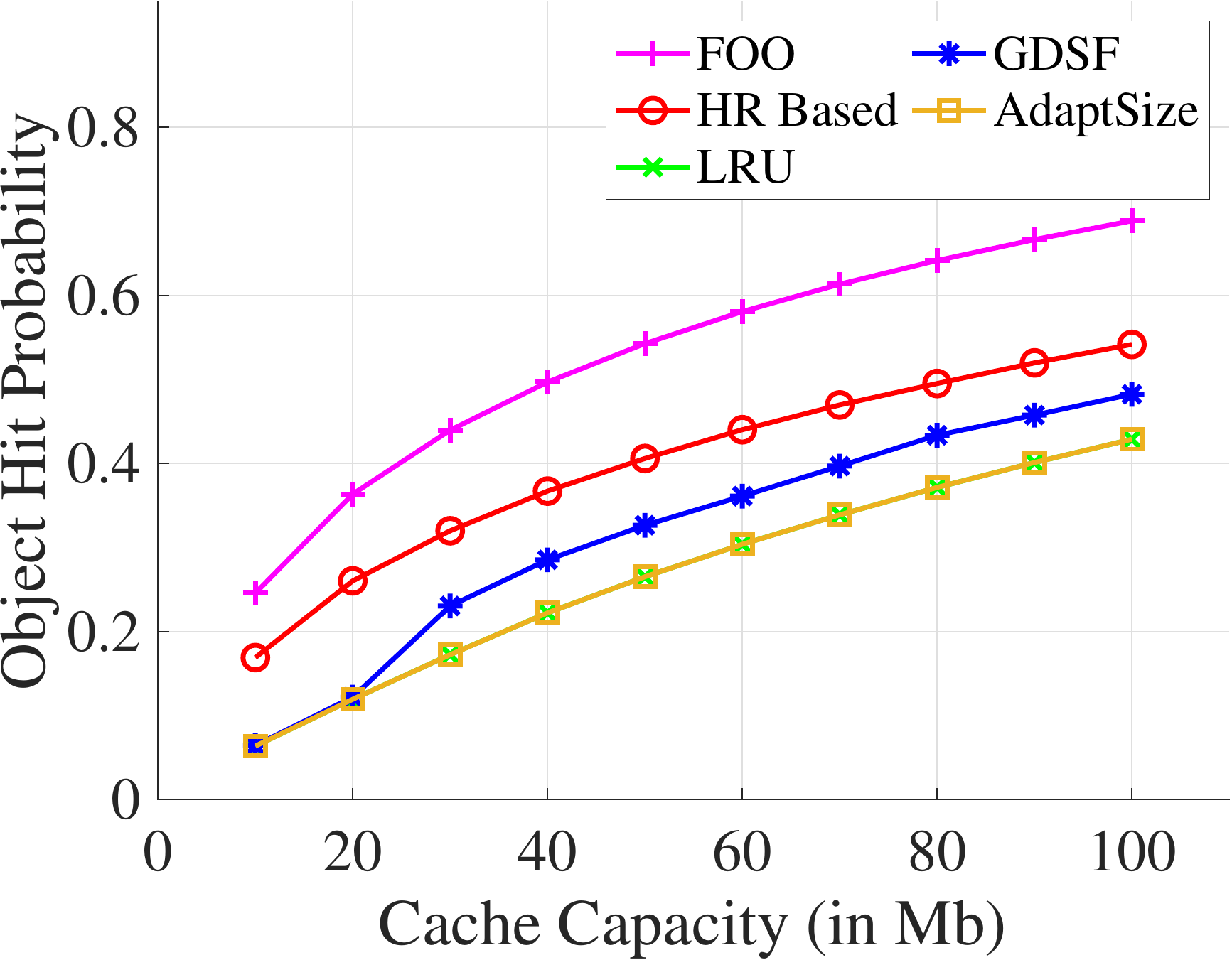}
\subcaption{Generalized Pareto (DHR)}
\end{minipage}
\begin{minipage}{0.32\textwidth}
\includegraphics[width=1\textwidth]{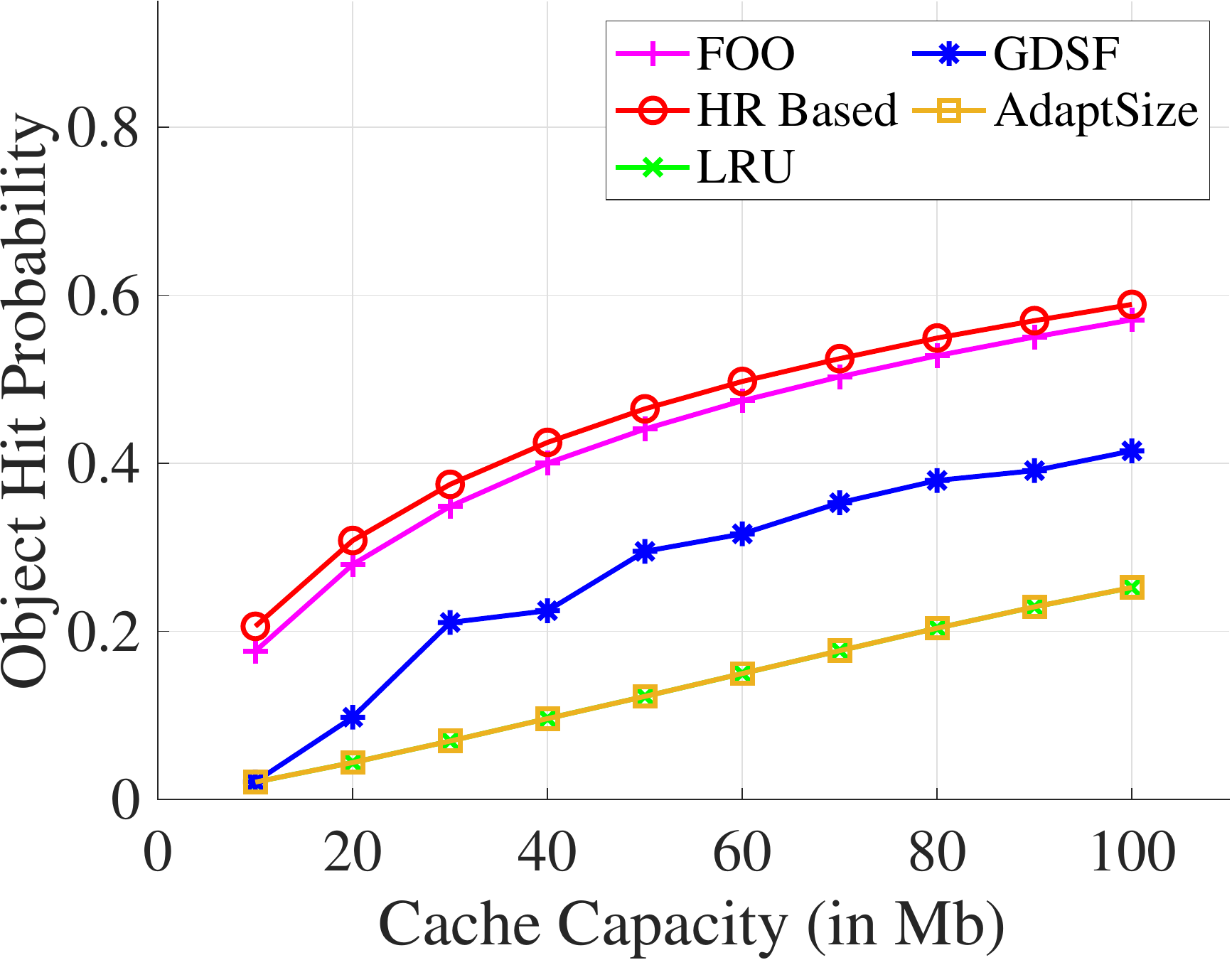}
\subcaption{Uniform (IHR)}
\label{pltuni}
\end{minipage}
\caption{Simulation Results for HR based upper bound and various caching policies under different inter request arrival distributions for variable object sizes ($n=100$).}
\label{plots-var}
\end{figure*}
\vspace{-0.4in}
\begin{figure}[h]
\vspace{-0.8in}
\centering
\begin{minipage}{0.55\textwidth}
\includegraphics[width=1\textwidth]{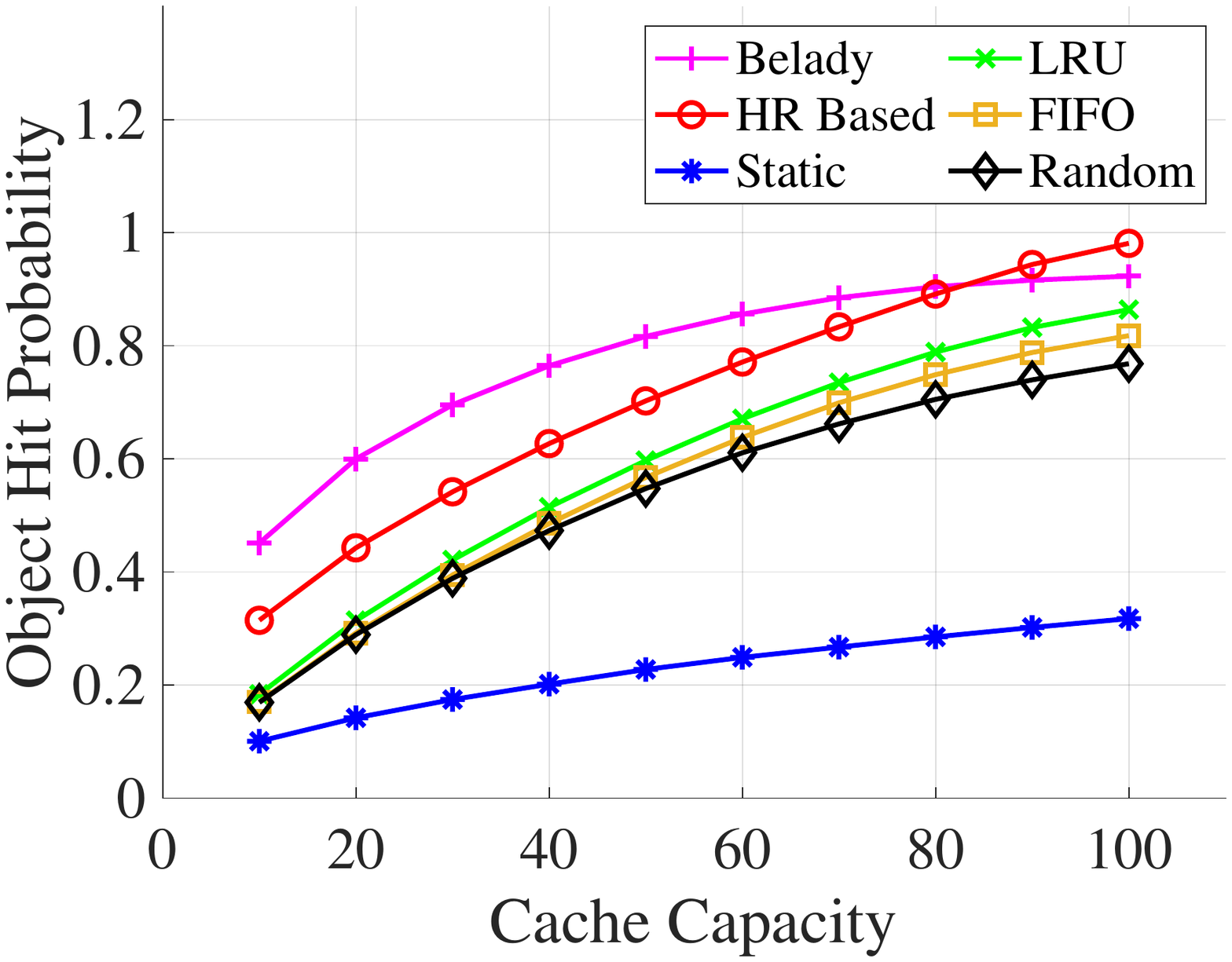}
\vspace*{-2.3cm}
\subcaption{Equal Size ($n=1000$)}
\end{minipage}
\begin{minipage}{0.55\textwidth}
\includegraphics[width=1\textwidth]{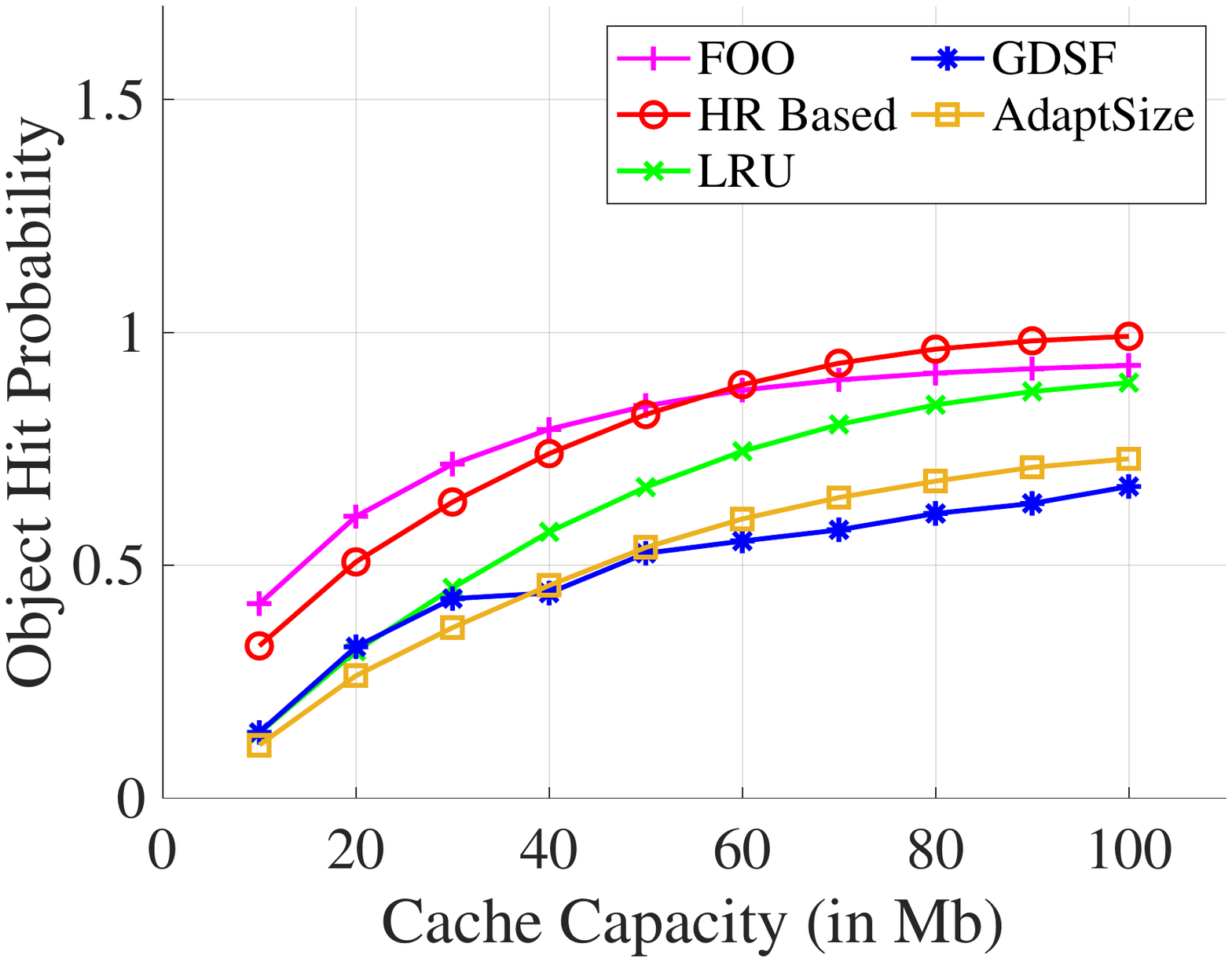}
\vspace*{-2.3cm}
\subcaption{Variable Size ($n=100$)}
\end{minipage}
\caption{Performance comparison under on-off request process}
\label{on-off-trace}
\end{figure}

\subsubsection{Figure \ref{on-off-trace}: on-off request arrivals and equal/variable size objects}
\label{ssec:fig4}
The parameters of the on-off process used to generate arrival times of requests of object $i$ ($i=1,\ldots,n$) are given in Section \ref{ssec:arrival}. 
There are $1000$ objects in the catalog for equal sized objects and $100$ objects for variable sized objects.
The average arrival rate for object $i$ is $\lambda_i \pi_{i,1}$ with $\pi_{i,1}=\alpha_i/(\alpha_i+\beta_i)$  (Section \ref{ssec:onoff}). The STATIC policy permanently stores
in the cache the $B$ objects in decreasing order of $\{\lambda_i \pi_{i,1}\}_i$.

For equal sized objects (resp. variable size objects) the HR bound is tighter than BELADY (resp. FOO) for low cache 
sizes whereas for higher cache sizes, BELADY (resp. FOO) becomes tighter. LRU performs the best for both equal sized and variable sized objects and 
STATIC policy performs the worst for equal sized objects. 

\subsubsection{Figure \ref{mmpp}: MMPP request arrivals and equal/variable size objects}
\label{ssec:fig5}
The parameters of the two-state MMPP (states $1$ and $2$) used to generate arrival times of requests are given in Section \ref{ssec:arrival}. There are $1000$ objects in the catalog
for equal sized objects and $100$ objects for variable sized objects.
The average arrival rates for object $i$ is  $\lambda_i(1)\gamma(1)+\lambda_i(2)\gamma(2)= (\lambda_i(1)\beta +  \lambda_i(2)\alpha)/(\alpha+\beta)$.
The STATIC caching policy permanently stores in the cache the $B$ objects with the highest average arrival rates.

Unlike in Figures \ref{plots}-\ref{on-off-trace}, BELADY is tighter than the HR based upper bound for equal sized objects (Figure \ref{mmpp}(a)) but the latter upper bound is
tighter than the FOO upper bound for variable sized objects (Figure \ref{mmpp}(b)).  STATIC performs the best among all  online caching policies. Note that, in our simulations, 
$\gamma(1)=\beta/(\alpha+\beta)$ and $\gamma(2)=\alpha/(\alpha+\beta)$ are comparable. We postulate that  the performance of STATIC will further improve when $\gamma(1) \gg\gamma(2)$  or $\gamma(1) \ll \gamma(2).$ For example, when $\gamma(1) \gg\gamma(2), \lambda_i^{STATIC}\sim  \lambda_i(1)\gamma(1)$; in this case the  STATIC policy will permanently store the popular objects in state $1$, thus always getting a hit when the MMPP is in state $1$.
\begin{figure*}[]
\vspace{-0.8in}
\centering
\begin{minipage}{0.55\textwidth}
\includegraphics[width=1\textwidth]{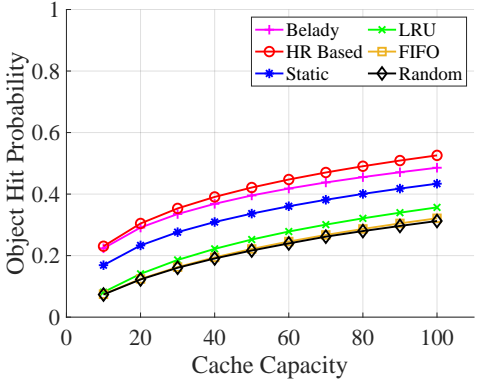}
\vspace*{-2.5cm}
\subcaption{Equal Size ($n=1000$)}
\end{minipage}
\begin{minipage}{0.55\textwidth}
\includegraphics[width=1\textwidth]{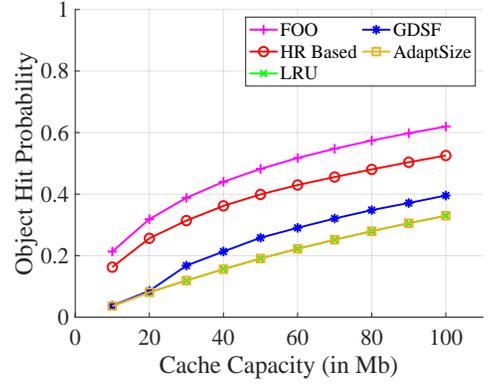}
\vspace*{-2.5cm}
\subcaption{Variable Size ($n=100$)}
\end{minipage}
\caption{Performance comparison under two-state MMPP request arrivals}
\label{mmpp}
\end{figure*}

\begin{figure}[]
\vspace{-0.8in}
\centering
\begin{minipage}{0.55\textwidth}
\includegraphics[width=1\textwidth]{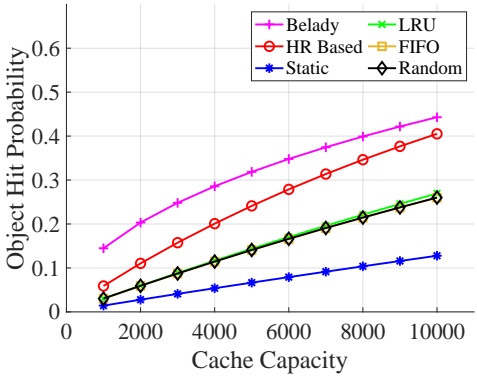}
\vspace*{-2.3cm}
\subcaption{Equal Size}
\end{minipage}
\begin{minipage}{0.55\textwidth}
\includegraphics[width=1\textwidth]{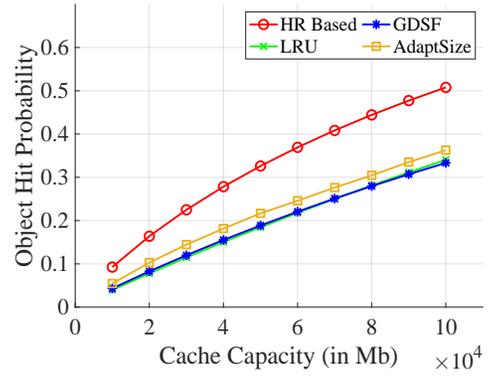}
\vspace*{-2.3cm}
\subcaption{Variable Size}
\end{minipage}
\caption{Performance comparison under shot noise model ($n=143871$).}
\label{shot-noise-trace}
\vspace{-0.15in}
\end{figure}
\subsubsection{Figure \ref{shot-noise-trace}: Shot noise request arrivals and equal/variable size objects}
\label{ssec:fig6}
The parameters of the  SNM used to generate Figure \ref{ssec:fig6} are given  in Section \ref{ssec:arrival}.
For equal sized objects (Figure \ref{ssec:fig6}(a)) our proposed HR bound not only upper bounds the hit probability for existing online caching policies but also provides a tighter bound than
the state-of-the-art BELADY.  STATIC policy performs the worst while LRU performs the best among all online policies. 
For variable sized objects ((Figure \ref{ssec:fig6}(b)) AdaptSize performs the best and  GDSF and LRU have similar performance. The difference in the object hit probability between the HR upper bound and AdaptSize suggests that there is room for improvement in caching policy performance.
\subsubsection{Figure \ref{real-trace}: Real-world trace}
\label{ssec:fig7}

Characteristics of the real-world trace and its application to the production of Figure \ref{real-trace} are discussed in Section \ref{ssec:arrival}. Upper bounds on the object hit probability
obtained  with HR and BELADY are almost identical. LRU performs the best and STATIC the worst.

\begin{figure}[]
\centering
\vspace*{-3.3cm}
\includegraphics[width=0.7\linewidth]{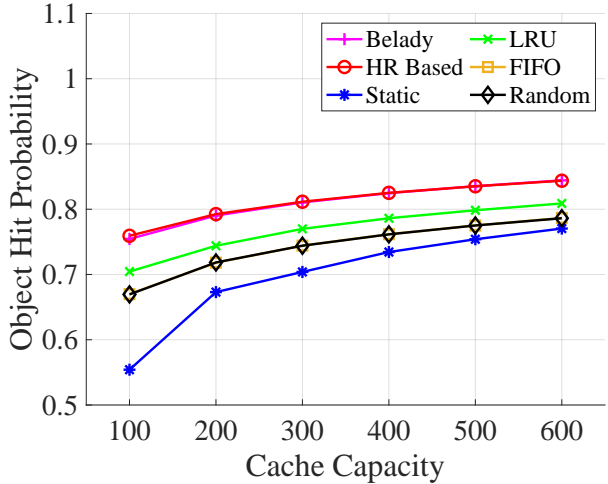}
\vspace*{-3.3cm}
\caption{Performance comparison under real world data trace ($n=5638$).}
\label{real-trace}
\vspace{-0.1in}
\end{figure}

\section{Related Literature}\label{sec:rel}

Many previous work has focused on improving cache hit probability for equal size objects \cite{Jiang2002, Megiddo2003, Tanenbaum01, Arlitt2000, Cao1997, Jaleel2010}. Also, many other policies have been proposed for variable size objects, for example: \emph{Least Recently Used} (LRU), \emph{Greedy Dual Size Frequency Caching Policy} (GDSF) \cite{GDSF98} and AdaptSize \cite{Berger17}. The primary objective of these policies is to improve object hit probability as opposed to the byte hit probability.
The optimal policy for equal size objects was first proposed by Belady et al. \cite{aho71}. Belady's algorithm uses exact knowledge of future requests, hence an offline policy.  When object requests follow the IRM, LFU achieves the maximum hit probability for equal sized objects \cite{liu98}. Computing optimal policy for variable object sizes is known to be NP-hard \cite{Chrobak12}.
Belady's algorithm serves as an upper bound on cache hit probability among all online demand based caching policies for equal size objects. For variable sized objects, upper bounds on object hit probability have also been proposed. Examples include Infinite-Cap \cite{Abrams95}, Flow-based offline optimal (FOO) and Practical Flow-based offline optimal (PFOO) \cite{berger18} policies. One major issue with all previous work is that the proposed upper bounds are offline, i.e. they assume exact knowledge of future requests. In this work, we proposed an online upper bound for both equal and variable sized objects with limited knowledge on object arrival process.
\section{Conclusion}\label{sec:con}
In this paper, we developed an upper bound on the cache hit probability for non-anticipative caching policies with equal object sizes. We showed that hazard rate associated with the object arrival process can be used to provide this upper bound. Inspired by the results for equal size objects, we extended the HR based argument to obtain an upper bound on the byte and object hit probability for variable size objects solving a knapsack problem. We derived closed form expressions for the upper bound under some specific object request arrival processes. 
We showed that HR based upper bound is tighter for a variety of object arrival processes than those analyzed in the literature. Future directions include to analyze the prefetching cost associated with any realizable hazard rate based caching policy. 

\bibliographystyle{ACM-Reference-Format}
\bibliography{refs}  

\section{Appendix}\label{sec:app}

\subsection{Proof of Equation \eqref{pi-t}}\label{sub-app-pi}
%

We drop the argument $t$ in $k_i(t)$ (see definition  at the beginning of Section \ref{ssec:number-hits}  as no confusion may occur.).
%
We have
$$p_i(t)=\Prob \left( T_{i,k_i}< T_{j,k_j},  \forall j\not=i\,|\, \mathcal{H}_{t}, \min_{j=1,\ldots,n}   T_{j,k_j}=t\right).$$
For $h>0$
\begin{align}
&\Prob\bigg(T_{i,k_i}\in (t,t+h), T_{j,k_j}>t+h, \forall  j\not=i\,\bigg|\mathcal{H}_{t},\min_{j=1,\ldots,n}   T_{j,k_j}\in (t, t+h)\bigg)\nonumber\\
&= \frac{\Prob\left(T_{i,k_i}\in (t,t+h), T_{j,k_j}>t+h, \forall  j\not=i  \,|\, \mathcal{H}_{t}\right)}{
\Prob\left(\min_{j=1,\ldots,n}   T_{j,k_j}\in (t, t+h)\,|\, \mathcal{H}_{t}\right)} =\frac{\Prob (T_{i,k_i}\in (t,h)\,|\,  \mathcal{H}_{i,t} )
 \prod_{1\leq j\leq n\atop j\not=i} \Prob(T_{j,k_j}>t+h\,|\, \mathcal{H}_{j,t})}
{\Prob\left(\min_{j=1,\ldots,n}   T_{j,k_j}\in (t, t+h)\,|\, \mathcal{H}_{t}\right)},\label{int1}
\end{align}
from the conditional independence assumption in (\ref{assumption-arrivals}). 
Let us focus on the denominator in (\ref{int1}).  It can be written as
\begin{align}
\Prob\left(\min_{j=1,\ldots,n}   T_{j,k_j}\in (t, t+h)\,|\, \mathcal{H}_{t}\right)= \sum_{j=1}^n \Prob( T_{j,k_j}\in (t,t+h)>t+h, \forall l\not=j\,|\, H_{t}) + f(h),\nonumber\\
\end{align}
with $f(h)\to 0$ as $h\to 0$,  since as $h\to 0$ there can be at least one random variables (rvs) located in  $(t,t+h)$ among the rvs $T_{1,k_1}, \ldots, T_{n,k_n}$ since these rvs are continous.
Therefore,
\begin{align}
&\Prob \bigg(T_{i,k_i}\in (t,t+h), T_{j,k_j}>t+h, \forall  j \not = i\bigg|\, \mathcal{H}_{t},\min_{j=1,\ldots,n}   T_{j,k_j}\in (t, t+h)\bigg)\nonumber\\
&= \frac{\Prob (T_{i,k_i}\in (t,h)\,|\,  \mathcal{H}_{i,t} ) \prod_{1\leq j\leq n\atop j\not=i}  \Prob(T_{j,k_j}>t+h\,|\, \mathcal{H}_{j,t})}{ \sum_{j=1}^n \Prob( T_{j,k_j}\in (t,d+t), T_{l,k_l}>t+h, \forall l\not=j
\,|\, H_{t}) +f(h)}\nonumber \\\
&=\Prob (T_{i,k_i}\in (t,h)\,|\,  \mathcal{H}_{i,t} ) \prod_{1\leq j\leq n\atop j\not=i}  \Prob(T_{j,k_j}>t+h\,|\, \mathcal{H}_{j,t})\nonumber\\
&\times\Biggl( \sum_{j=1}^n \Prob( T_{j,k_j}  \in (t,t+h)  \,|\,  \mathcal{H}_{l,t})\times  \prod_{1\leq l\leq n\atop l\not=j}  \Prob( T_{l,k_l}>t+h\,|\, \mathcal{H}_{l,t})+f(h)\Biggr)^{-1}\label{eq1-assump} \\
&=\frac{\frac{\Prob (T_{i,k_i}\in (t,t+h)\,|\,  \mathcal{H}_{i,t})}{\Prob(T_{i,k_i}>t+h\,|\, \mathcal{H}_{i,t})}}
{\sum_{j=1}^n \frac{\Prob (T_{j,k_j}\in (t,t+h)\,|\,  \mathcal{H}_{l,t})}{\Prob(T_{j,k_j}>t+h\,|\, \mathcal{H}_{l,t})} +f(h)}=\frac{\lambda^*_i(t+h)}{\sum_{j=1}^n \lambda^*_j(t+h)+f(h)},
\label{t+h}
\end{align}
%
where (\ref{eq1-assump}) follows from (\ref{assumption-arrivals}) and (\ref{t+h}) follows from (\ref{hazard-rate}).
Letting $h\to 0$ in (\ref{t+h}) gives \eqref{pi-t}.

\subsection{Proof of Lemma \ref{lemma:mmpp}}\label{sub-app-mmpp-lemma}
Let $k\geq 1$. By definition of policy ${\rm HR}$
\begin{equation}
\sum_{i\in B_k^{\rm HR}} \lambda_i(X(T_k-))\geq \sum_{i\in B_k^\pi} \lambda_i(X(T_k-)),
\label{inq}
\end{equation}
since under ${\rm HR}$ the $B$ objects with the smallest $\lambda_i(X(T_k-))$ are in the cache at time $T_k$. We have
\begin{eqnarray*}
\E[H^{\rm HR}_k\,|\, T_k=t, X(T_k-)=x] 
&=& \frac{\sum_{i\in B_k^{\rm HR}}\lambda_i(x)}{ \sum_{j=1}^n \lambda_j(x)} \\
&\geq &  \frac{\sum_{i\in B_k^\pi}\lambda_i(x)}{ \sum_{j=1}^n \lambda_j(x)} \quad \hbox{from } (\ref{inq})\\
&=&\E[H^\pi_k\,|\, T_k=t, X(T_k-)=x] .
\end{eqnarray*}
Removing the conditioning on $T_k$ and $x$ proves that 
\[
\E[H^{\rm HR}_k]\geq \E[H^\pi_k].
\]
Hence,
\[
\E[N^{\rm HR}_K]=\sum_{k=1}^K \E[H^{\rm HR}_k] \geq \sum_{k=1}^K \E[H^\pi_k]
=\E[N^\pi_K].
\]

\end{document}